\numberwithin{equation}{section}
\numberwithin{figure}{section}
\theoremstyle{plain}
\newtheorem{theorem}{Theorem}[section]
\newtheorem{lemma}[theorem]{Lemma}
\theoremstyle{remark}
\newtheorem{remark}[theorem]{Remark}
\theoremstyle{definition}
\newtheorem{definition}[theorem]{Definition}
 \newcommand{\TODO}[1]{\textbf{\color{red}[TODO: #1]}}
\algnewcommand{\LeftComment}[1]{\(\triangleright\) #1}
\global\long\def\defeq{\stackrel{\mathrm{{\scriptscriptstyle def}}}{=}}%
\global\long\def\norm#1{\left\Vert #1\right\Vert }%
\def\eps{\varepsilon}
\global\long\def\R{\mathbb{R}}%
\global\long\def\polylog{\mathrm{polylog}\;}%
\global\long\def\ot{\overline{t}}%
\global\long\def\ct{\mathcal{T}}%
\global\long\def\cS{\mathcal{S}}%
\global\long\def\sc{\mathbf{Sc}}%
\global\long\def\tsc{\widetilde{\mathbf{Sc}}}%
\global\long\def\epssc{{\epsilon_{\mproj}}}%\delta}%
    \renewcommand*{\bm}[1]{#1}%
\global\long\def\mzero{\mathbf{0}}%
\global\long\def\ma{\mathbf{A}}%
\global\long\def\mb{\mathbf{B}}%
\global\long\def\mi{\mathbf{I}}%
\global\long\def\ml{\mathbf{L}}%
\global\long\def\mw{\mathbf{W}}%
\global\long\def\mx{\mathbf{X}}%
\global\long\def\mpi{\mathbf{\Pi}}%
\global\long\def\mga{\mathbf{\Gamma}}%
\global\long\def\mproj{\mathbf{P}}%
\global\long\def\vx{\bm{x}}%
\global\long\def\vd{\bm{d}}%
\global\long\def\vf{\bm{f}}%
\global\long\def\vb{\bm{b}}%
\global\long\def\vc{\bm{c}}%
\global\long\def\vf{\bm{f}}%
\global\long\def\vs{\bm{s}}%
\global\long\def\vv{\bm{v}}%
\global\long\def\vw{\bm{w}}%
\global\long\def\vv{\bm{v}}%
\global\long\def\vl{\bm{l}}%
\global\long\def\vu{\bm{u}}%
\global\long\def\vx{\bm{x}}%
\global\long\def\ox{\overline{\vx}}%
\global\long\def\of{\overline{\vf}}%
\global\long\def\os{\overline{\vs}}%
\global\long\def\O{\widetilde{O}}%
\global\long\def\new{{(\mathrm{new})}}
\global\long\def\prev{{(\mathrm{prev})}}
\global\long\def\init{{(\mathrm{init})}}
\global\long\def\vzero{\bm{0}}%
\global\long\def\region{H}
\newcommand{\bdry}[1]{\partial #1}
\newcommand{\skel}[1]{\partial #1 \cup F_{#1}}
\global\long\def\collN{\mathcal{H}}
\newcommand{\sep}[1]{S(#1)}
\newcommand{\elim}[1]{{F_{#1}}}
\newcommand{\nfrac}[2]{\nicefrac{1}{2}}
\global\long\def\range#1{\operatorname{Range}(#1)}
\global\long\def\poly{\operatorname{poly}}
\global\long\def\tw{\mathrm{tw}}
\let\ref\cref
\begin{document}
	\title{Faster Min-Cost Flow and Approximate Tree Decomposition on Bounded Treewidth Graphs
	}
	% \begin{comment}
	\author{
Sally Dong\thanks{\texttt{sallyqd@uw.edu}. University of Washington.}
\and
Guanghao Ye\thanks{\texttt{ghye@mit.edu}. MIT.  Supported by NSF awards CCF-1955217 and DMS-2022448.} 
}
	% \end{comment}
	\date{\today}
	\maketitle
	\begin{abstract}
		We present an algorithm for min-cost flow in graphs with $n$ vertices and $m$ edges, given a tree decomposition of width $\tau$ and size $S$, and polynomially bounded, integral edge capacities and costs, running in $\widetilde{O}(m\sqrt{\tau} + S)$ time.
		This improves upon the previous fastest algorithm in this setting achieved by the bounded-treewidth linear program solver of~\cite{gu2022faster, separableLP}, which runs in $\widetilde{O}(m \tau^{(\omega+1)/2})$ time, where $\omega \approx 2.37$ is the matrix multiplication exponent. 
		Our approach leverages recent advances in structured linear program solvers and robust interior point methods (IPM).
		In general graphs where treewidth is trivially bounded by $n$, the algorithm runs in $\widetilde{O}(m \sqrt n)$ time, which is the best-known result without using the Lee-Sidford barrier or $\ell_1$ IPM, demonstrating the surprising power of robust interior point methods.
		
		As a corollary, we obtain a $\widetilde{O}(\tw^3 \cdot m)$ time algorithm to compute a tree decomposition of width $O(\tw\cdot \log(n))$, given a graph with $m$ edges.
	\end{abstract}
	
	\newpage
	\tableofcontents
	\newpage
	
	\section{Introduction}

An active area of research in recent years is the advancement of interior point methods (IPM) for linear and convex programs, with its origins tracing back to the works of~\cite{10.1007/BF02579150,renegar1988polynomial}.
This, along with the design of problem-specific data structures supporting the IPM, this had led to breakthroughs in faster general linear program solvers~\cite{CohenLS21} and faster max flow algorithms~\cite{lee2014path,BrandLLSSSW21,GaoLP21:arxiv,BGJLLPS21,chen2022maximum}, among others.

One line of research, inspired by nested dissection from~\cite{lipton1979generalized} and methods in numerical linear algebra~\cite{davis2006direct}, focuses on exploiting any separable structure in the constraint matrix of the linear program, which can be characterized by first associated a graph with the matrix.
Given a graph $G = (V,E)$ on $n$ vertices, we say $S \subseteq V$ is a \emph{balanced vertex separator} if there exists some constant $b \in (0,1)$ such that every connected component of $G \setminus S$ has size\footnote{An alternative weighted definition is sometimes used: For any weighting of the vertices, the connected components of $G \setminus S$ each has weight at most $b \cdot W$, where $W$ is the total weight.} at most $b \cdot n$.
More specifically, we say $S$ is a $b$-balanced separator.
We say $G$ is \emph{$f(n)$-separable} if any subgraph $H$ of $G$ has a balanced separator of size $f(|V(H)|)$.
The \emph{treewidth} of a graph informally measures how close a graph is to a tree, and is closely related to separability. 
If $G$ has treewidth $\tau$, then any subgraph of $G$ has a $1/2$-balanced separator of size $\tau+1$;
conversely, if $G$ is $\tau$-separable, then $G$ has treewidth at most $\tau \log n$ (c.f.~\cite{bodlaender1995approximating}).
By leveraging properties of separable graphs, 
the sequence of three papers \cite{treeLP, planarFlow, separableLP} have iteratively refined the robust IPM framework and associated data structures for solving structured linear programs.

\cite{treeLP} gave the first general linear program solver parametrized by treewidth:
Given an LP of the form $\min \{ \vc^\top \vx : \ma \vx = \vb, \vx \geq \vzero\}$ and a width $\tau$ decomposition of the \emph{dual graph}\footnote{The dual graph of a matrix $\ma \in \R^{n \times m}$ is a graph on $n$ vertices, where each column of $\ma$ gives rise to a clique (or equivalently, a hyper-edge). Importantly, when the linear program is a flow problem on a graph, $G_{\ma}$ is precisely the graph in the original problem.}
$G_{\ma}$ of the constraint matrix $\ma$, 
suppose the feasible region has inner radius $r$ and outer radius $R$, and the costs are polynomially bounded.
Then the LP can be solved to $\eps$-accuracy in $\O(m \tau^2 \log(R/(\eps r))$ time.
The $\tau^2$ factor arises from carefully analyzing the sparsity-pattern of the \emph{Cholesky factorization} of $\ma \ma^\top$ and the associated matrix computation times.
This run-time dependence on $\tau$ was improved from quadratic to $(\omega+1)/2 \approx 3.37/2 = 1.68$ in \cite{gu2022faster}, by means of a coordinate-batching technique applied to the updates during the IPM.
\cite{gu2022faster} further combined \cite{treeLP} with \cite{zhang2018sparse} to give the current best algorithm for semi-definite programs with bounded treewidth.

The LP solver from \cite{treeLP} is now subsumed by \cite{separableLP}, which shows that under the same setup, except where $G_{\ma}$ is known to be a $\kappa n^\alpha$-separable graph\footnote{Here, $\kappa$ is a constant expression, but could be a function of the size of $G_{\ma}$ (which is considered fixed).}
the LP can be solved in $\O\left( \left( \kappa^2 (m + m^{2\alpha + 0.5}) + m \kappa^{(\omega-1)/(1-\alpha)} + \kappa^\omega n^{\alpha \omega} \right) \cdot \log (\frac{mR}{\eps r})\right)$ time.
Taking $\kappa = \tau$ and $\alpha = 0$ for bounded treewidth graphs recovers the \cite{treeLP} result. 
For $n^\alpha$-separable graphs, the run-time expression simplifies to $\O \left( (m + m^{1/2+2\alpha}) \log(\frac{mR}{\eps r}) \right)$.

In the special setting of flow problems, whereby the constraint matrix $\ma$ is the vertex-edge incidence matrix of a graph and $\ma \ma^\top$ is its Laplacian, fast Laplacian solvers~\cite{spielman2004nearly,JambulapatiS21} and approximate Schur complements~\cite{DurfeeKPRS17} can be combined with the separable structure of the graph to speed up the matrix computations further at every IPM step.
Using these ideas, \cite{planarFlow} gave a nearly-linear time min-cost flow algorithm for planar graphs, which are $O(\sqrt{n})$-separable.

In this work, we expand the landscape with an improved result for min-cost flow problems on bounded treewidth graphs. Previously, \cite{fomin2018fully} showed how to compute a max vertex-disjoint flow in $O(\tau^2 \cdot n \log n)$ time given a directed graph with unit capacities and its tree decomposition of width $\tau$; their approach is combinatorial in nature.

\begin{restatable}[Main result]{theorem}{MainThm} \label{thm:main}
	Let $G=(V,E)$ be a directed graph with $n$ vertices and $m$ edges.
	Assume that the demands $\vd$, edge capacities $\vu$ and costs $\vc$ are all integers and bounded by $M$ in absolute value. 
	Given a tree decomposition of $G$ with width $\tau$ and size $S \leq n \tau$, there is an algorithm that computes a minimum-cost flow in $\O(m\sqrt{\tau}\log M + S)$ expected time\footnote{Throughout the paper, we use $\widetilde{O}(\cdot)$ to hide $\polylog m$ and $\polylog M$ factors.}.
\end{restatable}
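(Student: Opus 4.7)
The plan is to instantiate the robust interior point method (IPM) framework of \cite{CohenLS21}, which solves the min-cost flow LP $\min\{\vc^\top \vx : \ma \vx = \vb,\ \vzero \le \vx \le \vu\}$ (with $\ma$ the signed vertex-edge incidence matrix, so $\ma \md^2 \ma^\top$ is a weighted Laplacian) in $\widetilde{O}(\sqrt{m}\log M)$ iterations, and then to design a tree-decomposition-based data structure that supports the per-iteration work in amortized $\widetilde{O}(\sqrt{m\tau})$ time. Multiplied together these give the target $\widetilde{O}(m\sqrt{\tau}\log M)$ bound, plus an additive $\widetilde{O}(S)$ for one-shot construction of the data structure from the given decomposition.

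First, I would preprocess the supplied tree decomposition into a balanced separator tree whose bags each have size at most $\tau{+}1$ and whose depth is $O(\log n)$, in time $\widetilde{O}(S)$; this is a purely combinatorial step that mirrors the construction used for planar separators in \cite{planarFlow}. Each robust IPM iteration requires applying a projection $\mproj \bm{h} = \md \ma^\top (\ma \md^2 \ma^\top)^{-1} \ma \md \bm{h}$, where $\md$ evolves slowly. Following the blueprint of \cite{planarFlow} but replacing the planar hierarchy with the tree-decomposition separator hierarchy, I would maintain a nested family of approximate Schur complements onto each separator using the sampling-based constructions of \cite{DurfeeKPRS17}; the projection can then be applied by a logarithmic-depth recursion that only visits nodes on the paths from modified leaves to the root, with dense $O(\tau) \times O(\tau)$ work done locally at each separator. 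The robust IPM further guarantees that across all $\widetilde{O}(\sqrt{m})$ iterations the total number of significant coordinate updates is $\widetilde{O}(m)$, i.e., $\widetilde{O}(\sqrt{m})$ amortized per iteration; combined with a coordinate-batching scheme reminiscent of \cite{gu2022faster}, each update will be charged $\widetilde{O}(\sqrt{\tau})$ amortized work at each visited level, yielding the $\sqrt{\tau}$ factor (rather than $\tau$ or $\tau^{(\omega+1)/2}$). The randomness in the approximate Schur complements accounts for the \emph{expected} time bound in the statement.

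The main obstacle I anticipate is propagating the spectral guarantees of the approximate Schur complements up the separator tree of logarithmic depth while keeping the total error within the tolerance that the robust IPM can absorb. This entails tying the re-sampling schedule at each bag to the drift rate of the local weights $\md$, and showing that the composed (approximate) projection is close to the true $\mproj$ in a leverage-score weighted sense, as required by \cite{CohenLS21}. A secondary challenge is actually realizing the $\sqrt{\tau}$ (rather than $\tau$) amortized cost per update: this requires exploiting the IPM's fractional step sizes so that a dense $\tau \times \tau$ Schur operation at a bag can be charged against a batch of $\widetilde{\Theta}(\sqrt{\tau})$ coordinate changes that pass through it, an analysis that must be done carefully against the adversarial pattern of updates the IPM may produce. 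The remaining pieces—recovering an integral min-cost flow from the final near-optimal interior point via standard isolation/rounding, initializing the Schur-complement hierarchy in time linear in $S$, and handling the $\log M$ bit-complexity factor—should be routine adaptations of techniques already present in \cite{treeLP,planarFlow,separableLP}.
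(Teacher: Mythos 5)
Your high-level scaffolding is right — robust IPM with $\widetilde{O}(\sqrt{m}\log M)$ iterations, a separator tree derived from the given tree decomposition, approximate Schur complements via \cite{DurfeeKPRS17} maintained along root-to-leaf paths, and block updates in the style of \cite{gu2022faster} — and this is indeed the paper's architecture. But there is a genuine gap in the step where you claim to realize a $\sqrt{\tau}$ (rather than $\tau$) amortized cost. Your argument is that a dense $\tau\times\tau$ Schur operation at a separator can be charged against a batch of $\widetilde{\Theta}(\sqrt{\tau})$ coordinate changes passing through it; this is not how the savings actually arise, and it is not clear why $\sqrt{\tau}$ coordinates would reliably pass through a given bag between successive Schur rebuilds. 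In the paper, a single block update (one leaf of the separator tree, containing $\Theta(\tau)$ edges) costs $\O(\tau)$ along the entire leaf-to-root path, because the total size of $\bdry{H}\cup F_H$ over that path is $O(\tau\,\polylog m)$; there is no per-level $\sqrt{\tau}$ amortization, and there is no dense $\tau\times\tau$ work — the per-node Schur complements are sparsified so the cost is linear in the boundary size.

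The piece you are missing is the \emph{periodic restart} of the solution-maintenance data structure, which is the actual source of the $\sqrt{\tau}$ factor. The robust IPM guarantees that over any window of $k$ consecutive steps, $\O(k^2)$ blocks of $(\of,\os)$ must be refreshed. If one runs a single window of $T=\sqrt{m}$ steps, the total is $\O(m)$ block updates at $\O(\tau)$ each, i.e.\ $\O(m\tau)$ — too slow. The paper instead calls $\textsc{Exact}()$ and reinitializes the data structure every $\sqrt{m/\tau}$ steps. Within one restart window, $\O(m/\tau)$ blocks are touched, each at cost $\O(\tau)$, giving $\O(m)$ per window, and the reinitialization itself is $\O(m)$; there are $\O(\sqrt{\tau})$ windows over the $\O(\sqrt{m}\log(nM))$ total iterations, giving $\O(m\sqrt{\tau}\log M)$. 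Without the restart, the quadratic growth of block changes per window is not interrupted, and the analysis you sketch does not recover the claimed bound. Your proposal should incorporate this restarting schedule and replace the unsupported per-bag amortization with the path-cost bound $\sum_{H\in\text{path}}|\bdry{H}\cup F_H|=\O(\tau)$, which follows from the $\tau$-separator-tree construction with $O(\log m)$ height.
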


As a direct corollary of \cref{thm:main}, we can solve min-cost flow on any graph with $n$ vertices, $m$ edges and integral polynomially-bounded costs and constraints in $\O(m \sqrt{n})$ expected time, as the treewidth of any $n$-vertex graph is at most $\tau = n$, and the tree decomposition is trivially the graph itself.
This result matches that of \cite{lee2014path} (later improved to $\O(m + n^{1.5})$ by \cite{BrandLLSSSW21}) obtained using the Lee-Sidford barrier for the IPM, which requires $\O(\sqrt{n})$ iterations.
In contrast, we use the standard log barrier which requires $\O(\sqrt{m})$ iterations, and leverage the robustness of the IPM and custom data structures to reduce the amortized cost per iteration.

Using our faster max-flow algorithm as a subroutine, we can efficiently compute a tree decomposition of any given graph, where the width is within a $O(\log n)$-factor of the optimal:
\begin{restatable}[Approximating treewidth]{corollary}{ApproxTW} \label{cor:tw}
	Let $G=(V,E)$ be a graph with $n$ vertices and $m$ edges.
	There is an algorithm to find a tree decomposition of $G$ with width at most $O(\tw(G) \cdot\log n)$ in $\O(\tw(G)^3 \cdot m)$ expected time.
\end{restatable}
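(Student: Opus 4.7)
The plan is a standard divide-and-conquer construction of a tree decomposition based on recursive balanced vertex separators, using the max-flow specialization of \cref{thm:main} (obtained by setting all costs to zero) as the key subroutine. First, I would estimate $\tau = \tw(G)$ by doubling: try $\tau = 1, 2, 4, \ldots$ and stop at the smallest guess for which the procedure below succeeds; this only costs a constant factor since the final guess is within $2\tw(G)$.

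Next, for each recursive subproblem on a subgraph $H \subseteq G$, I would compute a $\tfrac{3}{4}$-balanced vertex separator $S \subseteq V(H)$ of size $O(\tau \log n)$ via iterated vertex-cut computations. The standard reduction splits each vertex $v$ into $v^{\mathrm{in}}, v^{\mathrm{out}}$ with a capacity-$1$ edge, reducing min vertex cuts to unit-capacity max flows to which \cref{thm:main} applies. The existence of a $\tfrac12$-balanced separator of size $\tau+1$ (recalled in the excerpt from the definition of treewidth) combined with standard flow-cut arguments \`a la Leighton--Rao / Bodlaender--Gilbert--Hafsteinsson--Kloks guarantees that after $\mathrm{poly}(\tau,\log n)$ max-flow calls, the union of the returned cuts yields a valid balanced separator of size $O(\tau \log n)$; failure certifies that the current guess for $\tau$ is too small, triggering the doubling step.

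I would then build the tree decomposition recursively: let $C_1, \ldots, C_k$ be the connected components of $H \setminus S$ (each of size at most $\tfrac{3}{4}|V(H)|$), recursively compute tree decompositions $T_i$ for the induced subgraphs $H[C_i \cup N(C_i)]$, add $S$ to every bag of each $T_i$, and attach each $T_i$ below a fresh root bag equal to $S$. This yields a valid tree decomposition, since the recursion depth is $O(\log n)$ and at every level at most $O(\tau \log n)$ vertices are added to the bags via $S$; the final width is therefore $O(\tau \log^2 n) = \O(\tw(G))$. For the running time, the subproblems at each recursion level are edge-disjoint, so the total edge work per level telescopes; multiplying the $\O(m\sqrt{\tau})$ cost of a single max-flow from \cref{thm:main} by the number of max-flow calls needed per separator and by $O(\log n)$ levels yields the claimed $\O(\tw(G)^3 \cdot m)$ bound (with polylog factors absorbed into $\O$).

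The main obstacle is the balanced-separator subroutine: proving that iterated vertex max-flows recover a separator whose size is within an $O(\log n)$ factor of the optimal $\tau$. This requires a careful potential argument along the lines of the region-growing / ball-growing technique, showing that each max-flow call either cuts off a constant fraction of the remaining heavy side or adds only $O(\tau)$ vertices to $S$, so that after $O(\log n)$ calls one reaches balance while keeping $|S| = O(\tau \log n)$. Verifying that the capacities and demands stay integral and polynomially bounded throughout (so that \cref{thm:main} applies with the claimed runtime) is a routine but necessary check.
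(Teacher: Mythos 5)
Your proposal has the right high-level shape (recursive tree decomposition via balanced vertex separators, separators found by max-flow) but misses the crucial bootstrapping step, and this is not a minor omission---it is where the $\tw(G)^3$ factor actually comes from.

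The core issue is a chicken-and-egg problem you don't resolve. \cref{thm:main} gives a max-flow algorithm running in $\O(m\sqrt{\tau})$ time \emph{only if you already have a tree decomposition of width $\tau$ as input}. But the whole point of \cref{cor:tw} is that we don't have one yet. Without a tree decomposition, the best you can feed the algorithm is the trivial one of width $n$, making each max-flow call cost $\O(m\sqrt{n})$ rather than $\O(m\sqrt{\tw})$, which destroys the claimed runtime. Your estimate ``$\O(m\sqrt{\tau})$ cost of a single max-flow'' silently assumes you already hold the very object you are trying to compute. The paper resolves this by first invoking the result of~\cite{brandt2019approximating} (stated as \cref{lem:bw19}): it finds a balanced vertex separator of size $\O(\tw(G)^2)$ in $\O(m\tw(G)^3)$ expected time \emph{without} a tree decomposition, and applying it recursively yields a coarse separator tree / tree decomposition of width $\O(\tw(G)^2)$. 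Only with this bootstrap decomposition in hand does \cref{thm:main} become efficient, giving max-flow in $\O(m\cdot\tw(G))$ per call. The $\tw(G)^3$ in the final bound is precisely the cost of this bootstrap; your accounting attributes it to the number of flow calls instead, which does not match and would not go through.

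Two further, more minor deviations. First, the paper does not prove the flow-to-separator reduction from scratch via a region-growing potential argument as you propose; it invokes the pseudo-approximation result of~\cite{arora2016combinatorial}, which gives an $O(\log n)$-approximate balanced \emph{edge} separator via $\polylog n$ single-commodity flows, after reducing vertex separation to directed edge separation via the vertex-splitting construction of~\cite{leighton1999multicommodity} (essentially the same $v\mapsto(v,v')$ splitting you describe). Reproving this from first principles is possible but unnecessary, and you acknowledge it as a gap yourself. Second, your width analysis yields $O(\tw(G)\log^2 n)$: you add an $O(\tw(G)\log n)$-size separator to every bag at each of $O(\log n)$ recursion levels. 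The paper cites~\cite{bodlaender1995approximating}, whose more careful recursion achieves $O(\tw(G)\log n)$; your naive ``add $S$ to all bags'' construction loses a $\log n$ factor relative to the stated bound. Neither of these is fatal by itself, but the missing bootstrap is a genuine gap in the argument.
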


It is well known that computing the treewidth of a graph is NP-hard~\cite{arnborg1987complexity} and there is conditional hardness result for even constant-factor approximation algorithm~\cite{austrin2012inapproximability}. For polynomial time algorithms, the best known result is a $O(\sqrt{\log \tw})$-approximation algorithm by \cite{feige2008improved}. 

There is a series of works focused on computing approximate treewidth for small treewidth graph in near-linear time, we refer the readers to~\cite{bodlaender2016c} for a more detailed survey.
Notably, \cite{fomin2018fully} showed for any graph $G$, there is an algorithm to compute a tree decomposition of width $O(\tw(G)^2)$ in $\O(\tw(G)^7 \cdot n)$ time. \cite{brandt2019approximating} improved the running time to $\O(\tw(G)^3 \cdot m)$ with slightly compromised approximation ratio $O(\tw(G)^2 \cdot \log^{1+o(1)} n)$.
More recently, \cite{bernstein2022deterministic} showed how to compute a tree decomposition of width $O(\tw(G) \cdot \log^3 n)$ in $O(m^{1 + o(1)})$ time.

	\section{Overview}

	The foundation of our algorithm is the planar min-cost flow algorithm from~\cite{planarFlow}.
	We begin with the identical robust IPM algorithm in abstraction, as given in \cref{alg:IPM_centering}.
	\cite{planarFlow} first defines a separator tree $\ct$ for the input graph,
	and uses it as the basis for the data structures in the IPM.
	We modify the separator tree construction, so that instead of recursively decomposing the input planar graph which is $O(\sqrt{n})$-separable, we recursively decompose the $\tau$-separable bounded treewidth graph.
	The leaf nodes of $\ct$ partition the edges of the input graph;
	We guarantee that each leaf node contains $O(\tau)$-many edges, compared to constantly-many in the planar case.
	
	There are two main components to the data structures from~\cite{planarFlow}:
	\begin{enumerate}
		\item A data structure \textsc{DynamicSC} (\cref{thm:apxsc}) is used to maintain an approximate Laplacian and an approximate Schur complement matrix at every node of of the separator tree $\ct$, corresponding to the Laplacian of the subgraph represented by the node, and its Schur complement onto the boundary vertices.
		This data structure implicitly represents the matrix $(\mb^\top \mw \mb)^{-1}$, where $\mw$ are edge weights changing at every step of the IPM. 
		We use \textsc{DynamicSC} in exactly the same way.
		
		\item A data structure \textsc{MaintainSoln} (\cref{thm:sol-maintain}) using $\ct$ which implicitly maintains the primal and dual solutions $\vf$ and $\vs$ at the current IPM iteration, and explicitly maintains approximate solutions $\of$ and $\os$ at the current IPM iteration. 
		The approximate solutions are used in the subsequent iteration to compute the step direction $\vv$ and edge weights $\vw$.
	\end{enumerate}
	
	In \cite{planarFlow}, the approximate solutions $\of$ and $\os$ are updated coordinate-wise, whenever a coordinate is sufficiently far from the true value.
	A coordinate update induces updates in the data structures as follows:
	If $\of_e$ or $\os_e$ is updated for an edge $e$, (subsequently, $\vw_e$ and $\vv_e$ are updated), then we find the unique leaf node $H$ in $\ct$ containing the edge $e$, and must update \textsc{DynamicSC} and \textsc{MaintainSoln} at all nodes along the path from $H$ to the root of $\ct$.
	\cite{planarFlow} shows this runtime depends on the sizes of the nodes visited.
	
	\cite{gu2022faster} introduced a natural batching technique for the coordinate updates, 
	where coordinates representing edges are grouped into blocks, and coordinate-wise updates are performed block-wise instead.
	Since the leaves of our separator tree $\ct$ contain $O(\tau)$ edges, 
	it is natural for us to also incorporate a blocking scheme, where the blocks are given by the edge partition according to the leaves of $\ct$.
	In this case, the runtime for data structure updates is the same whether we update a single coordinate or a block containing said coordinate, since they affect the same path from the leaf node to the root of $\ct$.
	We bound the overall runtime expression using properties of the new separator tree.
	
   Lastly, the RIPM guarantees that for each $k$ iterations, $\O(k^2)$-many blocks of $\of$ and $\os$ need to be updated, meaning that running our data structures for many IPM iterations leads to superlinear runtime scaling.
   We can, however, restart our data structures at any point, by explicitly computing the current exact solutions $\vf, \vs$ and reinitializing the data structures with their values in $\O(m)$ total time. 
   On balance, we choose to restart our data structures every $\sqrt{m/\tau}$-many iterations, 
   for a total restarting cost of $\O(m \sqrt{\tau})$.
   Between each restart, we make $O(m/\tau)$-many block updates using $\O(\tau)$ time each, for a total update cost of $\O(m \sqrt{\tau})$. 
   Hence, the overall run-time is $\O(m\sqrt{\tau})$.

	\section{Robust interior point method\label{subsec:IPM}}

For the sake of completion, we give the robust interior point method
developed in \cite{treeLP}, which is a refinement of the methods in
\cite{CohenLS21, van2020deterministic}, for solving linear programs of the
form
\begin{equation}
	\min_{\vf\in\mathcal{F}} \vc^{\top}\vf\quad\text{where}\quad\mathcal{F}=\{\mb^{\top}\vf=\vb,\; \vl\leq\vf\leq\vu\}\label{eq:LP}
\end{equation}
for some matrix $\mb\in\mathbb{R}^{m\times n}$. 
\begin{algorithm}
	\caption{Robust Interior Point Method from \cite{treeLP}\label{alg:IPM_centering}}
	\begin{algorithmic}[1]
		
		\Procedure{$\textsc{RIPM}$}{$\mb \in \mathbb{R}^{m \times n}, \vb, \vc,\vl,\vu,\epsilon$}
		
		\State Let $L=\| \vc \|_{2}$ and $R=\|\vu-\vl\|_{2}$
		
		\State Define $\phi_{i}(x)\defeq-\log(\vu_{i}-x)-\log(x-\vl_{i})$ 
		\State Define $\mu_i^t(\vf_i,\vs_i) \defeq {\vs_i}/{t}+\nabla \phi_i(\vf_i)$

		\State Define $\gamma^t(\vf,\vs)_i \defeq \|(\nabla^2 \phi_i(\vf_i))^{-1/2} \mu_i^t(\vf_i,\vs_i)\|_2$
		
		\Statex
		
		\Statex $\triangleright$ Modify the linear program and obtain an
		initial $(\vf, \vs)$ for modified linear program
		
		\State Let $t=2^{21}m^{5}\cdot\frac{LR}{128}\cdot\frac{R}{r}$
		%with $\delta=1/128$ %\sushant{Do we need this explicit $\delta$?}
		
		\State Compute $\vf_{c}=\arg\min_{\vl\leq\vf\leq\vu}\vc^{\top}\vf+t\phi(\vf)$
		and $\vf_{\circ}=\arg\min_{\mb^{\top}\vf=\vb}\|\vf-\vf_{c}\|_{2}$
		
		\State Let $\vf=(\vf_{c},3R+\vf_{\circ}-\vf_{c},3R)$ and $\vs=(-t  \nabla\phi(\vf_{c}),\frac{t}{3R+\vf_{\circ}-\vf_{c}},\frac{t}{3R})$
		
		\State Let the new matrix $\mb^{\mathrm{new}} \defeq [\mb;\mb;-\mb]$, the
		new barrier
		\[
		\phi_{i}^{\mathrm{new}}(x)
		=\begin{cases}
			\phi_{i}(x) & \text{if }i\in[m],\\
			-\log x & \text{else}.
		\end{cases}
		\]
		
		\Statex 
		
		\Statex $\triangleright$ Find an initial $(\vf,\vs)$ for the original
		linear program
		
		\State $((\vf^{(1)},\vf^{(2)},\vf^{(3)}),(\vs^{(1)},\vs^{(2)},\vs^{(3)}))\leftarrow\textsc{Centering}(\mb^{\mathrm{new}},\phi^{\mathrm{new}},\vf,\vs,t,LR)$
		
		\State $(\vf,\vs)\leftarrow(\vf^{(1)}+\vf^{(2)}-\vf^{(3)},\vs^{(1)})$
		
		\Statex 
		
		\Statex $\triangleright$ Optimize the original linear program
		
		\State $(\vf,\vs)\leftarrow\textsc{Centering}(\mb,\phi,\vf,\vs,LR,\frac{\epsilon}{4m})$
		
		\State \Return $\vf$
		
		\EndProcedure
		
		\Statex
		
		\Procedure{$\textsc{Centering}$}{$\mb,\phi,\vf,\vs,t_{\mathrm{start}},t_{\mathrm{end}}$}
		
		\State Let $\alpha=\frac{1}{2^{20}\lambda}$ and $\lambda=64\log(256m^{2})$
		where $m$ is the number of rows in $\mb$
		
		\State Let $t\leftarrow t_{\mathrm{start}}$, $\of\leftarrow\vf,\os\leftarrow\vs,\ot\leftarrow t$
		
		\While{$t\geq t_{\mathrm{end}}$}
		
		\State Set $t\leftarrow\max((1-\frac{\alpha}{\sqrt{m}})t,t_{\mathrm{end}})$
		
		\State Update $h=-\alpha/\|\cosh(\lambda\gamma^{\ot}(\of,\os))\|_{2}$
	\label{line:step_given_begin}
		
		\State Update the diagonal weight matrix $\mw=\nabla^{2}\phi(\of)^{-1}$\label{line:step_given_3}
		
		\State Update the direction $\vv$ where  $\vv_{i}=\sinh(\lambda\gamma^{\ot}(\of,\os)_{i})\mu^{\ot}(\of,\os)_i$\label{line:step_given_end} 

		\State Pick $\vv^{\|}$ and $\vv^{\perp}$ such that $\mw^{-1/2}\vv^{\|}\in\mathrm{Range}(\mb)$,
		$\mb^{\top}\mw^{1/2}\vv^{\perp}=\vzero$ and\label{line:step_user_begin}
		\begin{align*}
			\|\vv^{\|}-\mproj_{\vw}\vv\|_{2} & \leq\alpha\|\vv\|_{2},\\
			\|\vv^{\perp}-(\mi-\mproj_{\vw})\vv\|_{2} & \leq\alpha\|\vv\|_{2}
			\tag{$\mproj_{\vw}\defeq\mw^{1/2}\mb(\mb^{\top}\mw\mb)^{-1}\mb^{\top}\mw^{1/2}$}
		\end{align*}
		
		\State Implicitly update $\vf\leftarrow\vf+h\mw^{1/2}\vv^{\perp}$,
		$\vs\leftarrow\vs+\ot h\mw^{-1/2}\vv^{\|}$ \label{line:maintain_f_s}
		
		\State Explicitly maintain $\of,\os$ such that $\|\mw^{-1/2}(\of-\vf)\|_{\infty}\leq\alpha$
		and $\|\mw^{1/2}(\os-\vs)\|_{\infty}\leq\ot\alpha$ \label{line:step_user_end}
		
		\State Update $\ot\leftarrow t$ if $|\ot-t|\geq\alpha\ot$
		
		\EndWhile
		
		\State \Return $(\vf,\vs)$\label{line:step_user_output}
		
		\EndProcedure
		
	\end{algorithmic}
	
\end{algorithm}

\begin{theorem}[\cite{treeLP}] 
	\label{thm:IPM}
	Consider the linear program 
	\[
	\min_{\mb^{\top}\vf=\vb,\; \vl\leq\vf\leq\vu}\vc^{\top}\vf
	\]
	with $\mb\in\R^{m\times n}$. 
	Suppose there exists some interior point $\vf_{\circ}$ satisfying 
	$\mb^{\top}\vf_{\circ}=\vb$ and $\vl+r\leq \vf_{\circ} \leq \vu-r, $\footnote{For any vector $\vv$ and scalar $x$, we define $\vv+x$ to be the vector obtained by adding $x$ to each coordinate of $\vv$. We define $\vv-x$ to be the vector obtained by subtracting $x$ from each coordinate of $\vv$.} 
	for some scalar $r>0$.
	Let $L=\|\vc\|_{2}$ and $R=\|\vu-\vl\|_{2}$. For any $0<\epsilon\leq1/2$,
	the algorithm $\textsc{RIPM}$ (\cref{alg:IPM_centering}) finds $\vf$ such that $\mb^{\top}\vf=\vb$,
	$\vl\leq\vf\leq\vu$ and
	\[
	\vc^{\top}\vf\leq\min_{\mb^{\top}\vf=\vb,\; \vl\leq\vf\leq\vu}\vc^{\top}\vf+\epsilon LR.
	\]
	Furthermore, the algorithm has the following properties:
	\begin{itemize}
		\item Each call of $\textsc{Centering}$ involves $O(\sqrt{m}\log m\log(\frac{mR}{\epsilon r}))$
		many steps, and $\ot$ is only updated $O(\log m\log(\frac{mR}{\epsilon r}))$
		times.
		\item In each step of \textsc{Centering}, the coordinate $i$ in $\mw,\vv$ changes only if $\of_{i}$
		or $\os_{i}$ changes.
		\item In each step of \textsc{Centering}, $h\|\vv\|_{2}=O(\frac{1}{\log m})$.
		\item \cref{line:step_given_begin} to \cref{line:step_given_end} takes $O(K)$
		time in total, where $K$ is the total number of coordinate changes
		in $\of,\os$.
	\end{itemize}
	\qed
\end{theorem}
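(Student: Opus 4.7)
The plan is to follow the classical path-following template of~\cite{renegar1988polynomial}, but using the soft-max centrality potential of~\cite{CohenLS21} and carefully tracking how the $\alpha$-approximate surrogates $\of,\os,\ot$ propagate through the step formulas. First I would set up the central path $(\vf^*(t),\vs^*(t))$ defined by $\mu_i^t(\vf^*,\vs^*)=0$ together with feasibility, and recall from self-concordance of $\phi=\sum_i\phi_i$ that whenever $\|\gamma^t(\vf,\vs)\|_\infty\leq \tfrac{1}{2}$ along a feasible pair, the duality gap is $O(mt)$. Hence driving $t$ down to $t_{\mathrm{end}}=\epsilon LR/(4m)$ while keeping $\gamma^t$ bounded yields the claimed accuracy $\vc^\top\vf\leq\mathrm{OPT}+\epsilon LR$.

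Next I would introduce the potential $\Phi^t(\vf,\vs)=\|\cosh(\lambda\gamma^t(\vf,\vs))\|_2$ with $\lambda=\Theta(\log m)$. The direction $\vv_i=\sinh(\lambda\gamma^{\ot})\mu^{\ot}_i$ with step size $h=-\alpha/\Phi^{\ot}(\of,\os)$ is exactly the weighted gradient step for this potential in the $\mw$-norm. A direct computation using the standard second-order self-concordance identities $\tfrac{1}{2}\nabla^2\phi(\vf)\preceq\nabla^2\phi(\vf+\Delta)\preceq 2\nabla^2\phi(\vf)$ for $\|\Delta\|_{\nabla^2\phi(\vf)}\leq \tfrac{1}{2}$ shows that each exact step decreases $\Phi^t$ by $\Omega(\alpha^2)$ while $t$ shrinks by factor $(1-\alpha/\sqrt m)$, provided $\Phi^t=O(m^2)$, which in turn forces $\|\gamma^t\|_\infty=O(1)$ along the trajectory. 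The main obstacle is the robustness argument: $\mw$ and $\vv$ are computed from $\of,\os,\ot$ with tolerances $\|\mw^{-1/2}(\of-\vf)\|_\infty\leq\alpha$, $\|\mw^{1/2}(\os-\vs)\|_\infty\leq\ot\alpha$, and $|\ot-t|\leq\alpha\ot$; by self-concordance these perturb $\mu,\gamma,\mw,\vv$ coordinatewise by relative factors $1+O(\alpha)$, while the projection errors $\|\vv^{\|}-\mproj_{\vw}\vv\|_2,\|\vv^{\perp}-(\mi-\mproj_{\vw})\vv\|_2\leq\alpha\|\vv\|_2$ contribute at most $O(\alpha^2\|\vv\|_2^2)$ to $\Phi^t$. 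Choosing $\alpha=1/(2^{20}\lambda)$ keeps these lower-order and preserves the $\Omega(\alpha^2)$ per-step decrease.

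Third, for the iteration count, $t$ decays by factor $(1-\alpha/\sqrt m)$ per step so the centering loop terminates in $O(\alpha^{-1}\sqrt m\log(t_{\mathrm{start}}/t_{\mathrm{end}}))=O(\sqrt m\log m\log(mR/(\epsilon r)))$ iterations, and $\ot$ is refreshed only when $|\ot-t|\geq\alpha\ot$, giving $O(\log m\log(mR/(\epsilon r)))$ refreshes. For the outer wrapper, the auxiliary LP with $\mb^{\mathrm{new}}=[\mb;\mb;-\mb]$ admits the explicit centered starting pair $(\vf,\vs)=((\vf_c,3R+\vf_\circ-\vf_c,3R),\,(-t\nabla\phi(\vf_c),t/(3R+\vf_\circ-\vf_c),t/(3R)))$ where $\vf_c,\vf_\circ$ are the penalized and projected points; the first \textsc{Centering} call drives $t$ down to $LR$ and the construction forces the auxiliary slack variables toward $0$, giving a feasible approximately centered initial pair for the original LP, which the second \textsc{Centering} call then optimizes to the required tolerance.

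Finally, the four bulleted properties fall out of the explicit step formulas. Coordinate-separability of $\phi$ means $\mw_i$ and $\vv_i$ depend only on $\of_i,\os_i,\ot$, so changing $\of,\os$ in $K$ coordinates takes $O(K)$ time if $\|\cosh(\lambda\gamma^{\ot})\|_2$ is maintained incrementally through a balanced binary tree over coordinates; this also confirms that $\mw,\vv$ only change where $\of,\os$ do. The bound $h\|\vv\|_2=O(1/\log m)$ follows from $\|\vv\|_2\leq\|\sinh(\lambda\gamma^{\ot})\|_2\cdot\max_i|\mu_i|\cdot O(1)\leq\Phi^{\ot}$ combined with $|h|=\alpha/\Phi^{\ot}$ and $\alpha=\Theta(1/\log m)$. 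The refresh count for $\ot$ was already bounded above. Altogether this recovers \cref{thm:IPM} verbatim; the only substantive work, as emphasized, is the robust potential analysis, and since the statement is quoted directly from~\cite{treeLP} the intended presentation is to cite that paper for the full derivation and only sketch the above structure.
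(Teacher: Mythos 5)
The paper offers no proof of this theorem: it is imported verbatim from \cite{treeLP} as a black box (the statement ends in a tombstone with no accompanying proof environment), which is precisely the treatment you ultimately propose. Your outline of the underlying analysis --- the $\cosh$-potential over $\lambda\gamma^{\ot}$, robustness to the $\alpha$-approximate surrogates $\of,\os,\ot$ and to the $\alpha\|\vv\|_2$ projection error, the iteration and $\ot$-refresh counts from the $(1-\alpha/\sqrt m)$ decay of $t$, and the auxiliary-LP initialization with $[\mb;\mb;-\mb]$ --- faithfully reflects the argument in that reference, so there is nothing to correct.
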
 

We note that this algorithm only requires access to
$(\of,\os)$, but not $(\vf,\vs)$ during the main while-loop in \textsc{Centering}.
Hence, $(\vf,\vs)$ can be implicitly maintained via any data structure.  

\section{Nested dissection on bounded treewidth graphs} \label{subsec:nested-dissection}
In this section, we show how to leverage the structural properties of bounded treewidth graphs to
find a sparse Cholesky factorization of $\ml \defeq \mb^\top \mw \mb$, and hence implicitly maintain an approximation of $\ml^{-1}$ as part of the projection matrix 
$\mproj_{\vw}\defeq\mw^{1/2}\mb(\mb^{\top}\mw\mb)^{-1}\mb^{\top}\mw^{1/2}$ used in the RIPM.

\subsection{Separator tree for bounded treewidth graph}
The notion of using a \emph{separator tree} to represent the recursive decomposition of a separable graph is well-established in literature, c.f \cite{eppstein1996separator, henzinger1997faster}. In~\cite{planarFlow}, the authors show that the separator tree can be used to construct a sparse approximate
projection matrix for RIPM. Here, we extends their result to bounded treewidth graphs. 

\begin{definition}[$\tau$-separator tree]
	Let $G$ be a graph with $n$ vertices and $m$ edges. A \emph{separator tree} $\ct$ of $G$ is a rooted binary tree whose nodes represent a recursive decomposition of $G$ based on balanced vertex separators.

	Formally, each node $H$ of $\ct$ is a \emph{region} (edge-induced subgraph) of $G$; we denote this by $H \in \ct$. 
	At a node $H$, we store subsets of vertices $\bdry{H}, \sep{H}, \elim{H}$, 
	where $\bdry{\region}$ is the set of \emph{boundary vertices} of $H$, i.e.\ vertices with neighbours outside $\region$ in $G$;
	$\sep{\region}$ is a balanced vertex separator of $\region$;
	and $\elim{\region}$ is the set of \emph{eliminated vertices} at $\region$. 

	In a $\tau$-separator tree, the nodes and associated vertex sets are defined recursively in a top-down manner as follows: 
	
	\begin{enumerate}
		\item The root of $\ct$ is the node $\region = G$, with $\bdry{\region} = \emptyset$ and $\elim{\region} = \sep{\region}$.
		\item A non-leaf node $\region \in \ct$ has exactly two children $H_1, H_2 \in \ct$ that form a edge-disjoint partition of $\region$.
		If $|V(H)| \geq \Omega(\tau)$, then the intersection of the vertex sets $V(H_1) \cap V(H_2)$ is a balanced vertex separator $\sep{\region}$ of $\region$, with $|S(H)|\leq \tau$.
		Define the set of eliminated vertices at $H$ to be $\elim{\region} \defeq \sep{\region} \setminus \bdry{\region}$.
		
		By definition of boundary vertices, we have $\bdry{H_1} \defeq (\bdry{\region} \cup \sep{\region}) \cap V(H_1)$, and  
		$\bdry{H_2} \defeq (\bdry{\region} \cup \sep{\region}) \cap V(H_2)$.
		
		\item If $H$ is a region with $|E(H)| \leq \Theta(\tau)$, then we stop the recursion and $\region$ becomes a leaf node. 
		Define $\sep{\region} = \emptyset$ and $\elim{\region} = V(\region) \setminus \bdry{\region}$. 
	\end{enumerate}
	
	By construction, the leaf nodes of $\ct$ partition the edges of $G$.
	If $H$ is a leaf node, let $E(H)$ denote the edges contained in $H$.
	If $H$ is not a leaf node, let $E(H)$ denote the union of all the edges in the leaf nodes in the subtree $\ct_H$.
	Let $\eta$ denote the height of $\ct$.
\end{definition}

Next, we show how to construct an appropriate separator tree for bounded treewidth graphs. 

\begin{theorem}\label{thm:construct-separator-tree}
	Let $G$ be a graph with $n$ vertices and $m$ edges. Given a tree decomposition of $G$ with width $\tau$, we can construct a $\tau$-separator tree $\ct$ of $G$ with height $\eta=O(\log m)$ in $\O(n\tau)$ time.
\end{theorem}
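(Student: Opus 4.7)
The plan is to build $\ct$ top-down by repeatedly applying a standard lemma: a graph equipped with a width-$\tau$ tree decomposition admits a $1/2$-balanced vertex separator of size at most $\tau+1$, obtainable from a single bag of the decomposition. Given this, the separator tree is produced by recursion, and the bulk of the work is bookkeeping the sub-tree-decompositions along the way.

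First I would describe how to split a single region $H$ with an accompanying tree decomposition $T_H$ of width $\tau$. Assign to each bag $B \in T_H$ the weight equal to the number of edges of $H$ contained in $V(B)$ (or, when combining components, the number of edges in the associated subregion). By a centroid argument on the tree $T_H$, there exists a bag $B^\star$ such that deleting $V(B^\star)$ from $T_H$ leaves subtrees $T_1,\dots,T_k$ each of total edge weight at most $|E(H)|/2$; such a $B^\star$ can be located in $O(|T_H|)$ time by rooting $T_H$ arbitrarily and walking towards the heavier child. By the defining property of tree decompositions, $V(B^\star)$ is a vertex separator of $H$ of size at most $\tau+1$, and the components of $H\setminus V(B^\star)$ are contained in the union of the $V(T_i) \setminus V(B^\star)$.

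To enforce the binary constraint of the definition I greedily bundle the subregions $H_1^{(i)}$ (each with edge count $\leq |E(H)|/2$) into two groups $H_1,H_2$ whose edge counts are each at most $2|E(H)|/3$, using the classical fact that items of size at most $|E(H)|/2$ can always be 2-partitioned this way. Take $\sep H = V(B^\star)\setminus \bdry H$ and set $\bdry{H_j}=(\bdry H\cup \sep H)\cap V(H_j)$, exactly as required by the definition. Tree decompositions $T_{H_1},T_{H_2}$ are obtained from $T_H$ by restricting every bag to $V(H_j)\cup V(B^\star)$ and keeping only bags meeting $H_j$; both restrictions still have width $\leq \tau$ and $|T_{H_1}|+|T_{H_2}|\leq |T_H|+O(|T_H|)$ in the sense that each vertex/bag is duplicated at most along the $V(B^\star)$ piece. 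We recurse on $H_1,H_2$ and stop whenever $|E(H)|\leq \Theta(\tau)$, which matches the leaf condition.

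For height, the edge count of any child is at most a $2/3$ fraction of that of its parent, so the recursion terminates after $\eta = O(\log(m/\tau)) = O(\log m)$ levels. For runtime, the work at a node $H$ is $O(|T_H|+|E(H)|)$ (centroid computation, grouping, restriction of $T_H$). The initial decomposition has size $O(n\tau)$; since every vertex appears in at most $O(\eta)=O(\log m)$ ancestor--descendant restricted bags (by standard charging of the $V(B^\star)$ duplicates along the separator tree path), the summed sizes $\sum_{H\in\ct}|T_H|$ is $O(n\tau \log m)$, and the total edge work $\sum_H |E(H)|$ is $O(m\log m)$. These both fit into $\widetilde O(n\tau)$.

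The main obstacle I foresee is the bookkeeping for the restricted tree decompositions: one must (i) avoid recomputing each $T_{H_j}$ from scratch to keep the time per level linear, and (ii) verify that duplicating $V(B^\star)$ into both children still yields a valid width-$\tau$ decomposition and that the total blow-up over all levels is only an $O(\log m)$ factor. This can be handled by charging every duplication of a bag to the unique separator bag $B^\star$ on the root-to-leaf path in $\ct$ that caused it, which gives the claimed $\widetilde O(n\tau)$ time bound.
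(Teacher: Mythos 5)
Your approach is a re-derivation rather than a black-box invocation: you effectively re-prove the key lemma the paper cites (Theorem 4.17 of the treeLP paper, which finds a $2/3$-balanced vertex separator \emph{and} the two sub-tree-decompositions in $O(|V(H)|\tau)$ time) via a centroid argument on the tree decomposition. That part is fine and is morally the same construction. Your $2/3$-bipartition claim for items of size $\leq W/2$ is also correct, though it requires a small case split ($k=2$, or ``largest item $\geq W/3$'' vs.\ ``all items $< W/3$'') rather than pure greedy.

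The genuine gap is in the height argument. You track only $|E(H)|$ and claim every child has at most $\tfrac23$ of its parent's edges, but this fails precisely when a large fraction of $E(H)$ lies \emph{inside} the separator bag $B^\star$. Removing $B^\star$ from $T_H$ then leaves subtrees carrying very few edges; your greedy bipartition only controls the edges assigned to those subtrees, while the $O(\tau^2)$ edges internal to $B^\star$ still have to be distributed between $H_1$ and $H_2$, and nothing forces that split to be balanced. Concretely, if $H$ has $|V(H)| = O(\tau)$ and $|E(H)| = \Theta(\tau^2)$ (say $H$ is close to a clique on $\tau$ vertices), the tree decomposition can be a single bag, the centroid is that bag, and after deleting it there are no edge-carrying subtrees at all --- the argument makes no progress, yet $|E(H)| \gg \Theta(\tau)$ so the recursion hasn't terminated. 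The paper sidesteps exactly this by its explicit third case: when $|V(H)| \leq 2\tau$ it abandons vertex separators entirely and does a pure $\tfrac23$-balanced edge bipartition, and it proves the height bound via the compound potential $|V(H)|\cdot|E(H)|$ rather than $|E(H)|$ alone. You need an analogous case split (or an equivalent potential function) to close the gap. Your runtime charging for the restricted sub-decompositions is also sketchier than the paper's (which just reuses the $O(|V(H)|\tau)$ bound from the cited lemma level by level), but you flag that yourself and it is fixable; the missing case split in the height argument is the substantive issue.
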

\begin{remark}\label{rmk:boundary-size}
	Given a $\tau$-separator tree with $\eta = O(\log m)$, then $|F_H\cup \partial H|<O(\tau\poly\log(m))$.
\end{remark}

Before prove the theorem above, we need the following lemma about balanced vertex separators.
\begin{lemma}[{\cite[Theorem 4.17]{treeLP}}]
	Let $(X, T)$ be a width-$\tau$ tree decomposition of a graph $G$ on $n$ vertices. Then in $O(n \tau)$ time, we can find a 2/3-balanced vertex separator $\left(A, S, B\right)$ of $G$, and tree decompositions $\left(X_1, T_1\right)$ of $G[A \cup S]$ and $\left(X_2, T_2\right)$ of $G[B \cup S]$ each of width at most $\tau$.
\end{lemma}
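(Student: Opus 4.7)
The plan is to take the separator $S$ to be a single bag $X_{t^*}$ of the given tree decomposition, chosen via a weighted-centroid argument on $T$, and to obtain $(X_1, T_1)$ and $(X_2, T_2)$ by restricting $(X, T)$ to each side of the separator.

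First, I would preprocess so that $|T| = O(n)$ (a standard reduction), then root $T$ arbitrarily. For each $v \in V(G)$, the bags containing $v$ form a connected subtree of $T$; let $\mathrm{top}(v)$ be the shallowest such bag, and define node weights $w(t) = |\{v : \mathrm{top}(v) = t\}|$, so $\sum_t w(t) = n$. All $w(t)$'s are computed by one DFS that touches each $(t,v)$ pair with $v\in X_t$, in $O(n\tau)$ time. Then find a weighted centroid $t^* \in T$, i.e.\ a node such that every component of $T - t^*$ has total weight at most $n/2$; the standard walk-to-the-heaviest-neighbor procedure finds one in $O(|T|)$ time. Set $S \defeq X_{t^*}$, which has $|S| \leq \tau+1$.

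Second, let $T_1,\ldots,T_k$ be the components of $T - t^*$ and set $V_i = \bigl(\bigcup_{t\in T_i} X_t\bigr) \setminus S$. Because any edge $(u,v) \in E(G)$ lies in some bag $X_t$ and because for $v \notin S$ the bag-subtree of $v$ is confined to a single $T_i$, the tree-decomposition axioms yield $V(G) = S \sqcup V_1 \sqcup \cdots \sqcup V_k$ with no $G$-edge between distinct $V_i, V_j$. Each $|V_i| \leq n/2$ by the centroid property, so a one-pass greedy bin-packing produces a bipartition $\mathcal{A} \sqcup \mathcal{B} = \{1,\ldots,k\}$ with $A \defeq \bigcup_{i\in\mathcal{A}} V_i$ and $B \defeq \bigcup_{i\in\mathcal{B}} V_i$ both of size at most $2n/3$, giving the required $2/3$-balanced vertex separator $(A,S,B)$.

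Third, for $G[A\cup S]$ let $T_1$ be the subtree of $T$ induced by $\{t^*\}\cup \bigcup_{i\in\mathcal{A}} T_i$ (connected, since every $T_i$ attaches only through $t^*$), and set bags $X^A_t \defeq X_t \cap (A\cup S)$ for $t \in T_1$. Widths drop to at most $\tau$, and the covering axioms for vertices and edges of $G[A\cup S]$ are inherited directly from $(X,T)$. Construct $(X_2, T_2)$ symmetrically. Each bag intersection is $O(\tau)$, so both restricted decompositions are built in $O(n\tau)$ time, matching the claimed bound.

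The step I expect to require the most care is verifying the connected-subtree axiom in the restricted decompositions, specifically for vertices $v\in S$ that appear in bags in several different components $T_i$. The key observation is that in the original $T$ the bag-subtree of such a $v$ must pass through $t^*$, which forces it to include the unique neighbor $r_i$ of $t^*$ in each visited $T_i$; hence in $T_1$ every bag of $T_i$ containing $v$ is still connected to $X^A_{t^*}$ through $X^A_{r_i}$. For $v\in A$, the bag-subtree of $v$ in $T$ cannot touch $t^*$ (since $v\notin S$), so it sits entirely inside the unique $T_i$ with $v\in V_i$ and is preserved verbatim. The remaining verifications (edge coverage, width, runtime accounting) are routine.
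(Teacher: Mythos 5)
The paper does not actually prove this lemma --- it imports it verbatim from \cite[Theorem 4.17]{treeLP} --- so your proof can only be measured against the standard argument, which it matches: take $S$ to be a weighted-centroid bag of $T$, group the components of $T-t^*$ into two sides, and restrict $(X,T)$ to each side. Your verifications of the tree-decomposition axioms are the right ones, and the subtle point you flag (connectivity of the bag-subtree of a vertex $v\in S$ after restriction, which works because that subtree contains $t^*$ and its intersection with $\{t^*\}\cup\bigcup_{i\in\mathcal{A}}T_i$ stays connected) is handled correctly. The weight function via $\mathrm{top}(v)$, the bound $|V_i|\le w(T_i)\le n/2$, and the $O(n\tau)$ accounting are all fine.

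The one step that does not survive scrutiny as written is the grouping: ``a one-pass greedy bin-packing produces a bipartition with both parts of size at most $2n/3$.'' Under the natural reading --- add each $V_i$ to the currently smaller bin --- this is false: component weights proportional to $(0.21,\,0.2,\,0.5,\,0.09)$, each at most half the total $W$, produce bins of weight $0.7W$ and $0.3W$; in general that rule only guarantees $3W/4$. The correct (still elementary, still one-pass) grouping is a short case split: if some $|V_i|\ge n/3$, take $A=V_i$ alone, so $|A|\le n/2$ and $|B|=W-|V_i|\le n-n/3=2n/3$; otherwise every $|V_i|<n/3$, so adding components to $A$ until $|A|$ first reaches $n/3$ (or the components are exhausted) leaves $|A|<n/3+n/3=2n/3$ and $|B|\le W-n/3\le 2n/3$. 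With that substitution the argument is complete. One further small imprecision: edge coverage in the restricted decompositions is not quite ``inherited directly,'' since an edge with both endpoints in $S$ may be witnessed in $T$ only by a bag on the other side of $t^*$; it is nonetheless always covered by the bag at $t^*$ itself, so nothing breaks.
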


\begin{proof}[Proof of \cref{thm:construct-separator-tree}]
	We note that for any subgraph $H$ of $G$, using the lemma above, we can find the $2/3$-balanced vertex separator in time $O(|V(H)|\cdot \tau)$. Then, we construct the separator tree $\ct$ recursively as follows, starting with the subgraph $H = G$:
	\begin{enumerate}
		\item Given a subgraph $H$ of $G$, if $|E(H)| \leq \Theta(\tau)$, then we stop the recursion and $H$ becomes a leaf node.
		\item If$|V(H)|> 2\tau$, 
		then we can find a 2/3-balanced vertex separator $\left(A, S, B\right)$ of $H$ in time $O(|V(H)|\cdot \tau)$. 
		Then let $H_1$ have vertex set $A \cup S$ and contain all edges incident to $A$, and let $H_2$ have vertex set $B \cup S$ and contain all edges incident to $B$. 
		We partition the edges in $\sep\region$ arbitrarily into $H_1$ and $H_2$.
		\item If $|V(H)| \leq 2\tau$, then we partition the edges of $H$ into two sets each of size at most $\frac23 |E(H)|$, and let $H_1$ and $H_2$ be graphs on $V(H)$ with their respective edge sets.\footnote{We use $2\tau$ here to differentiate between the second and third case, instead of the more natural $\tau$, in order to avoid any infinite-loop edge cases in the recursive process arising from division and rounding.}
	\end{enumerate}
	Consider a non-leaf node $H$ with children $H_1$ and $H_2$, we note that \[
		|V(H_i)|\cdot |E(H_i)| \leq \frac{2}{3}|V(H)|\cdot |E(H)| \quad \text{for } i\in\{1,2\}.
	\]
	This directly shows the height of the separator tree $\ct$ is $O(\log m)$.
	
	The running time directly follows by the lemma above and the fact that we can find a balanced edge partition in $O(|E(H)|)$ time for any subgraph $H$ of $G$.
\end{proof}

\subsection{Nested dissection using a separator tree}
\begin{definition}[Block Cholesky decomposition]
	The \emph{block Cholesky decomposition} of a symmetric matrix $\ml$ with two blocks indexed by $F$ and $C$ is:
	\begin{equation*}\label{eq:basic_chol}
		\ml
		= 
		\left[
		\begin{array}{cc}
			\mi
			& \mzero \\
			\ml_{C,F} (\ml_{F,F})^{-1}
			& \mi
		\end{array}
		\right] 
		\left[
		\begin{array}{cc}
			\ml_{F,F}
			& \mzero \\
			\mzero
			& \sc(\ml, C)
		\end{array}
		\right] 
		\left[
		\begin{array}{cc}
			\mi
			& (\ml_{F,F})^{-1}\ml_{F,C} \\
			\mzero & \mi
		\end{array}\right],
	\end{equation*}
where the middle matrix in the decomposition is a block-diagonal matrix with blocks indexed by $F$ and $C$, with the lower-right block being the \emph{Schur complement} $\sc(\ml,C)$ of $\ml$ onto $C$:
\[
\sc(\ml,C) \defeq \ml_{C,C} - \ml_{C,F}\ml_{F,F}^{-1}\ml_{F,C}. 
\]
\end{definition}

\begin{comment}
\TODO{guess we don't really need this, since it is hidden in the data structure theorem}
\begin{lemma}[$\textsc{ApproxSchur}$ procedure \cite{DurfeeKPRS17}]
	\label{lem:fastApproxSchur} 
	Let $\ml$ be the weighted Laplacian of a graph with $n$ vertices and $m$ edges, and
	let $C$ be a subset of boundary vertices of the graph.
	Let $\gamma = 1/n^3$ be the error tolerance.
	Given approximation parameter $\eps \in (0,1/2)$, 
	there is an algorithm \textsc{ApproxSchur}$(\ml, C, \eps)$ that 
	computes and outputs an $\eps$-approximate Schur complement $\tsc(\ml, C)$ that satisfies
	the following properties with probability at least $1-\gamma$:
	\begin{enumerate}
		\item The graph corresponding to $\tsc(\ml, C)$ has $O(\eps^{-2} |C| \log (n/\gamma))$ edges. \label{approxSC: sparsity}
		\item The total running time is $O(m \log^3 (n/\gamma) + \eps^{-2} n \log^4(n/\gamma))$. \label{approxSC: runtime}
	\end{enumerate}
\end{lemma}
\end{comment}

We use the separator tree structure to factor the matrix $\ml^{-1} \defeq (\mb^\top \mw \mb)^{-1}$:
\begin{theorem}[Approximate $\ml^{-1}$ factorization, c.f. \cite{planarFlow}] \label{thm:L-inv-factorization}
	Let $\ct$ be the separator tree of $G$ with height $\eta$.
	For each node $H \in \ct$ with edges $E(H)$, let $\mb[H] \in \R^{n \times m}$ denote the matrix $\mb$ restricted to rows indexed by $E(H)$,
	and define $\ml[H] \defeq \mb[H]^\top \mw {\mb[H]}$.
	
	Given approximation parameter $\epssc$, suppose for each node $H$ at level $i$ of $\ct$, we have a matrix $\ml^{(H)}$ satisfying the $e^{i \epssc}$-spectral approximation
	\begin{align}
		\ml^{(\region)} &\approx_{i\epssc} \sc(\ml[\region], \bdry{\region}\cup \elim{\region}).
	\end{align}
	Then, we can approximate $\ml^{-1}$ by
	\begin{equation} \label{eq:L-inv-factorization}
		\ml^{-1} \approx_{\eta\epssc}
		\mpi^{(0)\top}\cdots\mpi^{(\eta-1)\top} \mga
		\mpi^{(\eta-1)}\cdots\mpi^{(0)},
	\end{equation} 
	where
	\begin{equation}
	\mga \defeq 
	\left[
	\begin{array}{cccc}
		\sum_{H \in \ct(0)} \left(\ml^{(H)}_{F_H, F_H}\right)^{-1} & \mzero & \mzero\\
		\mzero &  \ddots & \mzero\\
		\mzero & \mzero & \sum_{H \in \ct(\eta)} \left(\ml^{(H)}_{F_H, F_H}\right)^{-1}
	\end{array}
	\right],
	\end{equation}
	and for $i = 0, \dots, \eta-1$,
	\begin{equation}
		\mpi^{(i)} \defeq \mi - \sum_{H \in \ct(i)} \mx^{(H)},
	\end{equation}
	where $\ct(i)$ is the set of nodes at level $i$ in $\ct$, the matrix $\mpi^{(i)}$ is supported on $\bigcup_{H \in \ct(i)} \skel{H}$ and padded with zeros to full dimension,
	and for each $H \in \ct$,
	\begin{equation}\label{def:mx^(H)}
		\mx^{(H)} \defeq \ml^{(H)}_{\bdry{\region}, F_H} \left( \ml^{(H)}_{F_H, F_H}\right)^{-1}.
	\end{equation}
\end{theorem}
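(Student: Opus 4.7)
My plan is to first establish the claimed factorization \emph{exactly} under the assumption that each $\ml^{(H)}$ equals the true Schur complement $\sc(\ml[H], \bdry{H} \cup \elim{H})$, and then bootstrap to the approximate statement by tracking how spectral approximation errors compose across the $\eta$ levels of $\ct$. The core observation driving the exact case is that the separator tree encodes a nested dissection ordering: eliminating $F_H$ at a node $H$ exactly produces the Schur complement onto $\bdry{H}$, which is precisely the portion of the parent node's matrix that is still ``active'' after the elimination of its children's interior vertices.

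For the exact case, I would proceed level-by-level from the leaves up. At a leaf $H$, $\ml^{(H)} = \ml[H]$, and one block Cholesky step eliminating $F_H$ yields the identity
\[
(\ml[H])^{-1} \supseteq (\mi - \mx^{(H)})^\top \bigl(\ml^{(H)}_{F_H,F_H}\bigr)^{-1} (\mi - \mx^{(H)}) + (\text{terms on } \bdry{H}),
\]
where the residual on $\bdry{H}$ is the Schur complement onto $\bdry{H}$, which, by the edge-partition property of $\ct$, sums with the sibling's Schur complement and the parent's separator entries to give $\sc(\ml[\parent(H)], \bdry{\parent(H)} \cup \elim{\parent(H)})$. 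Applying the block Cholesky inverse formula recursively, the elimination operators at a single level $i$ act on disjoint index sets $F_H$ (as $H$ ranges over $\ct(i)$), so they commute and combine into the single matrix $\mpi^{(i)} = \mi - \sum_{H\in\ct(i)} \mx^{(H)}$, and the diagonal inverse blocks at that level combine into the corresponding block of $\mga$. Telescoping upward from the leaves to the root produces the factorization in \eqref{eq:L-inv-factorization} with equality, provided the $\ml^{(H)}$ are exact.

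For the approximate case, I would replace each exact Schur complement by its $e^{i\epssc}$-approximation $\ml^{(H)}$ and argue inductively on the level $i$ that the resulting partial product is an $e^{i\epssc}$-spectral approximation of the correct partial product. The key compositional facts I would invoke are: (i) if $\ma \approx_\eps \mb$ and $\mc$ is any matrix of compatible dimensions, then $\mc^\top \ma \mc \approx_\eps \mc^\top \mb \mc$; (ii) spectral approximation is preserved under the Schur complement operation, i.e.\ $\ma \approx_\eps \mb$ implies $\sc(\ma, C) \approx_\eps \sc(\mb, C)$; (iii) $\approx_{\eps_1} \cdot \approx_{\eps_2} = \approx_{\eps_1+\eps_2}$ for PSD matrices. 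Combining (ii) with the inductive hypothesis and the assumed accuracy of $\ml^{(H)}$, the accumulated error after processing level $i$ is at most $i\epssc$, matching the hypothesis. After $\eta$ levels the cumulative error is $\eta \epssc$, giving the claimed bound.

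The main obstacle I anticipate is bookkeeping the additive error growth cleanly, in particular showing that the block-diagonal parent-level Schur complement is accurately assembled from the children's approximate Schur complements summed with the separator entries at the parent. This requires showing that summing two spectral approximations (from the two siblings) on overlapping boundary indices preserves the approximation factor, and that padding/zero-extending $\mx^{(H)}$ to full dimension does not introduce spurious error outside $\skel{H}$. Once these compositional facts are stated carefully, the rest of the proof is an inductive unrolling of the block Cholesky inversion formula aligned with the separator tree.
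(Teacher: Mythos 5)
The paper does not prove this theorem; it is stated as ``c.f.~\cite{planarFlow}'' and the proof lives in that prior work. So the comparison here is between your proposal and the argument one would actually have to carry out, not against a proof in this paper.

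Your exact-case sketch is sound: the block Cholesky identity, the fact that $\mx^{(H)}$ for distinct $H$ at a fixed level act on disjoint index sets (so the single-level elimination operators combine into $\mpi^{(i)}$ and the diagonal pieces into the corresponding block of $\mga$), and the recursive Schur-complement identity $\sc(\ml[\parent(H)],\cdot) = \sc(\ml[H_1],\bdry{H_1}) + \sc(\ml[H_2],\bdry{H_2})$ are all the right ingredients. The compositional facts (i)--(iii) are also individually correct.

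The gap is in how you want to bootstrap to the approximate case. Your plan is to argue that each piece of the factorization is a spectral approximation of the corresponding piece in the exact factorization and then compose via (i) and (iii). But (i) -- conjugation preserves spectral approximation -- only helps when the conjugating matrix is \emph{the same} on both sides. Here the conjugating matrices $\mpi^{(i)}$ themselves are built from the approximate $\ml^{(H)}$ (recall $\mx^{(H)} = \ml^{(H)}_{\bdry{H},F_H}(\ml^{(H)}_{F_H,F_H})^{-1}$), so the approximate factorization is \emph{not} a conjugation of the exact middle block by a fixed matrix; both the middle block and the conjugators change. Comparing $\bar{\mpi}^{(i)\top}\bar{\mga}\bar{\mpi}^{(i)}$ (exact) to $\mpi^{(i)\top}\mga\mpi^{(i)}$ (approximate) is not a direct application of (i)--(iii), and naively bounding $\|\mpi^{(i)} - \bar{\mpi}^{(i)}\|$ in operator norm will not yield the clean multiplicative guarantee $\approx_{\eta\epssc}$.

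The missing key lemma is a one-step version of the following form: if $\mm$ is PSD with blocks $F,C$, and $\ms \approx_\delta \sc(\mm, C)$, then
\[
(\mi - \mm_{CF}\mm_{FF}^{-1})^\top \begin{bmatrix} \mm_{FF}^{-1} & \mzero \\ \mzero & \ms^{-1}\end{bmatrix}(\mi - \mm_{CF}\mm_{FF}^{-1}) \approx_\delta \mm^{-1},
\]
which \emph{does} follow from (i) because the conjugating matrix is built from $\mm$, not $\ms$. The crux of the proof is then to apply this with $\mm = \ml^{(H)}$ (so the conjugating matrix is exactly the $\mx^{(H)}$ appearing in the theorem) and observe that the resulting factorization is the \emph{exact} inverse of $\ml^{(H)}$; the hypothesis $\ml^{(H)} \approx_{i\epssc} \sc(\ml[H],\cdot)$ then transfers to the inverses. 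Combined with the recursive Schur-complement identity for the parent and an induction over levels, this gives $\eta\epssc$ overall without ever needing to compare factorizations with different conjugators. Your ``main obstacle'' paragraph gestures at this but does not supply the lemma, and as written the route through (i)--(iii) alone would stall exactly at the point where $\mpi^{(i)} \neq \bar{\mpi}^{(i)}$.
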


\cite{planarFlow} gives a data structure to maintain an implicit representation of $\ml^{-1}$ as the weights $\vw$ undergoes changes in the IPM:

\begin{theorem}[{\cite[Theorem 6]{planarFlow}}]
	\label{thm:apxsc}
	Given a graph $G$ with $m$ edges and its $\O(\tau)$-separator tree $\ct$ with height $\eta = O(\log m)$, there is a deterministic data structure \textsc{DynamicSC} which maintains the edge weights $\vw$ from the IPM, and at every node $H \in \ct$, maintains two Laplacians $\ml^{(H)}$ and $\tsc(\ml^{(H)}, \partial H)$ dependent on $\vw$.	It supports the following procedures:
	\begin{itemize}
		\item $\textsc{Initialize}(G,\vw\in\R_{>0}^{m}, \epssc > 0)$:
		Given a graph $G$, initial weights $\vw$, projection matrix approximation accuracy $\epssc$, preprocess in $\widetilde{O}(\epssc^{-2} m)$ time.
		\item $\textsc{Reweight}(\vw\in\R_{>0}^{m}, \text{given implicitly as a set of changed coordinates})$: 
		Update the weights to $\vw$ in $\widetilde{O}( \epssc^{-2} \sum_{H\in\mathcal{H}}|F_H\cup\partial H| )$ time, where $\mathcal{H} \defeq \{H\in \ct:(\vw-\vw^\prev)|_{E(H)} \neq \mzero\}$ is the set of nodes containing an edge with updated weight. (Note that $\collN$ is a union of paths from leaf nodes to the root.)
		
		%If $\collN$ is the set of leaf nodes in $\ct$ that contain an edge whose weight is updated, then $\ml^{(H)}$ and $\tsc(\ml^{(H)}, \bdry{\region})$ are updated only for nodes $H\in \pathT{\collN}$.
		\item Access to Laplacian $\ml^{(H)}$ at any node $H \in \ct$ in time $\widetilde{O}\left( \epssc^{-2} |\partial H \cup F_H|\right)$.
		\item Access to Laplacian $\tsc(\ml^{(H)}, \bdry{\region})$ at any node $H \in \ct$ in time $\widetilde{O}\left( \epssc^{-2} |\partial H|\right)$.
	\end{itemize}
	Furthermore, with high probability, for any node $H$ in $\ct$, if $H$ is at level $i$, then
	\begin{align*} 
		 \ml^{(\region)} &\approx_{i \epssc} \sc(\ml[H], \bdry{H}\cup \elim{H}), \qquad \text{and} \\
		 \tsc(\ml^{(H)}, \bdry{\region}) &\approx_{\epssc} \sc(\ml^{(H)}, \bdry{\region}).
	\end{align*}
\end{theorem}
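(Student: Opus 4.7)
The statement is cited from \cite{planarFlow}, so the plan is to verify that its proof carries over verbatim once we replace the planar graph's $O(\sqrt{n})$-separator tree with our $\tau$-separator tree from \cref{thm:construct-separator-tree}. The design is agnostic to the separator size; what matters is the abstract structural properties of the separator tree plus the size bound $|\partial H \cup F_H| = O(\tau \polylog m)$ from \cref{rmk:boundary-size}.

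The construction proceeds bottom-up. At a leaf $H$, set $\ml^{(H)}$ to the exact Laplacian of $E(H)$, supported on $V(H) = \partial H \cup F_H$ of size $O(\tau)$. At an internal node $H$ with children $H_1, H_2$, form the Laplacian $\tsc(\ml^{(H_1)}, \partial H_1) + \tsc(\ml^{(H_2)}, \partial H_2)$, whose support lies in $\partial H_1 \cup \partial H_2 \supseteq \partial H \cup F_H$, and feed it to the approximate Schur complement subroutine of \cite{DurfeeKPRS17} to produce $\ml^{(H)}$ on $\partial H \cup F_H$, with a second call producing $\tsc(\ml^{(H)}, \partial H)$. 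For \textsc{Reweight}, the collection $\collN$ of nodes requiring recomputation is precisely the union of root-to-leaf paths through the affected leaves; we recompute along these paths bottom-up using the same formula. Each call to \textsc{ApproxSchur} runs in $\widetilde{O}(\epssc^{-2} |\partial H \cup F_H|)$ time and outputs a Laplacian with that many edges, so summing yields the stated running times for \textsc{Initialize} and \textsc{Reweight}, and access follows from the sizes of the stored Laplacians.

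The approximation $\ml^{(H)} \approx_{i \epssc} \sc(\ml[H], \partial H \cup F_H)$ is proved by induction on the level, starting exactly at the leaves and accumulating one $\epssc$ of error per level going upward. The inductive step combines three facts: (i) the identity $\sc(\ml[H], \partial H \cup F_H) = \sc(\sc(\ml[H_1], \partial H_1) + \sc(\ml[H_2], \partial H_2), \partial H \cup F_H)$, which says that Schur complementation commutes with summation along shared vertex sets; (ii) spectral approximation is preserved under addition and under Schur complementation of PSD matrices; and (iii) a single \textsc{ApproxSchur} call at this level contributes $\epssc$ to the error. The separate bound $\tsc(\ml^{(H)}, \partial H) \approx_{\epssc} \sc(\ml^{(H)}, \partial H)$ then follows directly from the \textsc{ApproxSchur} call that produces it.

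The main obstacle is showing that each level contributes exactly $\epssc$ to the accumulated error rather than doubling, despite the intermediate $\tsc$ calls on the children's outputs being folded into the next level's \textsc{ApproxSchur} input. Concretely, one must verify that taking $\tsc$ of an already-approximate Laplacian and then feeding the sum to \textsc{ApproxSchur} at the parent yields the same net error contribution as a single \textsc{ApproxSchur} step. This analysis mirrors that of \cite{planarFlow} exactly; the only place where planarity was invoked in their proof is in the size bound on $|\partial H \cup F_H|$, which here is replaced by our $O(\tau \polylog m)$ bound from \cref{rmk:boundary-size}.
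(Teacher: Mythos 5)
The paper itself gives no proof of this theorem---it is cited verbatim from \cite[Theorem 6]{planarFlow}, and the only adaptation is \cref{rmk:boundary-size} replacing the planar $|\partial H \cup F_H| = O(\sqrt{n})$ bound with $O(\tau\,\polylog m)$. So your strategy of auditing the cited proof to confirm it never uses planarity beyond the separator sizes and the tree structure is exactly what is required, and your high-level description of the bottom-up recursive approximate-Schur construction is faithful.

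One detail of your sketched construction is off, and it is precisely the source of the ``doubling'' concern you raise at the end. In the separator tree one has equality $\partial H_1 \cup \partial H_2 = \partial H \cup F_H$ (both equal $\partial H \cup S(H)$, since $\partial H_j = (\partial H \cup S(H)) \cap V(H_j)$ and $V(H_1) \cup V(H_2) = V(H)$), not just the containment $\supseteq$ you wrote. Consequently the sum $\tsc(\ml^{(H_1)}, \partial H_1) + \tsc(\ml^{(H_2)}, \partial H_2)$ is \emph{already} a Laplacian supported exactly on $\partial H \cup F_H$; one does not feed it to a further \textsc{ApproxSchur} call to obtain $\ml^{(H)}$, one simply sets $\ml^{(H)}$ equal to this sum. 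The only \textsc{ApproxSchur} call made at node $H$ is the one producing $\tsc(\ml^{(H)}, \partial H)$, which is cached and consumed at the \emph{parent}. With this correction, the per-level error bookkeeping is transparent: the $\epssc$ per level comes from the single $\tsc$ applied to each child before summing, and combined with the exact Schur identity $\sc(\ml[H], \partial H \cup F_H) = \sc(\ml[H_1], \partial H_1) + \sc(\ml[H_2], \partial H_2)$ and preservation of spectral approximation under sums and Schur complements, the induction yields $\ml^{(H)} \approx_{i\epssc} \sc(\ml[H], \partial H \cup F_H)$ with no doubling to rule out. The remainder of your argument (running times from the size bound, $\collN$ as a union of leaf-to-root paths) is correct.
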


	\section{Solution maintenance}

Assuming the correct maintenance of Laplacians and Schur complements along a recursive separator tree, \cite{planarFlow} gave detailed data structures for maintaining the exact and approximate flow and slack solutions throughout the IPM.
Recall that the leaf nodes of the separator tree $\ct$ form a partition of the edges of $G$. 
In \cite{planarFlow}, each leaf node of the separator tree contains $O(1)$-many edges.
while in our case, each leaf node contains $O(\tau)$-many edges,
and therefore we update the approximate solution in a block manner. 
The data structures in \cite{planarFlow} naturally generalized from coordinate-wise updates to the block-wise case,
so we use their implementation in a black-box manner.
Here, we state a combined version of their main theorems.

We use $\vf_{[i]}$ to denote the subvector of $\vf$ indexed by edges in the $i$-th leaf node of $\ct$, and similarly $\vs_{[i]}$. %where $r=\Theta(m/\tau)$ is the total number blocks. 
We use $\vf_{[i]}^{(k)}$ and $\vs_{[i]}^{(k)}$ to denote the vector $\vf_{[i]}$ and $\vs_{[i]}$ at the $k$-th step of the IPM. 
Each block contains $O(\tau)$-many variables.

\begin{theorem}[{\cite[Theorem 9, 10]{planarFlow}}]\label{thm:sol-maintain}
	Given a graph $G$ with $m$ edges and its separator tree $\ct$ with height $\eta$,
	there is a randomized data structure 
	implicitly maintains the IPM solution pair $(\vf,\vs)$ undergoing IPM changes,
	and explicitly maintains its approximation $(\of,\os)$,
	and supports the following procedures with high probability against an adaptive adversary:
	\begin{itemize}
		\item $\textsc{Initialize}(G,\vf^{\init} \in\R^{m},\vs^{\init} \in\R^{m},\vv\in \R^{m}, \vw\in\R_{>0}^{m},\epsilon_{\mproj}>0,\overline{\epsilon}>0)$:
		Given a graph $G$, initial solutions $\vf^\init, \vs^{\init}$, initial direction $\vv$, initial weights $\vw$,
		target step accuracy $\epsilon_{\mproj}$ and target approximation accuracy
		$\overline{\epsilon}$, 
		preprocess in $\widetilde{O}(m \epsilon_{\mproj}^{-2})$ time, 
		set the implicit representations $\vf \leftarrow \vf^\init, \vs \leftarrow \vs^{\init}$,
		and set the approximations $\of \leftarrow \vf, \os \leftarrow \vs$.
		
		\item $\textsc{Reweight}(\vw\in\R_{>0}^{m},$ given implicitly as a set of changed weights): 
		Set the current weights to $\vw$ in $\widetilde{O}(\epsilon_{\mproj}^{-2} \tau K)$\footnote{The original bound here is $\widetilde{O}(\epsilon_{\mproj}^{-2}\sqrt{mK})$ where the $\sqrt{mK}$ factor comes from the fact that they can bound $\sum_{H\in \mathcal{H}}|F_H \cup \partial H|\leq \sqrt{mK}$ and  $\mathcal{H} = \{H\in\ct : (\vw-\vw^\prev)|_{E(H)}\neq \vzero\}$. See \cite[Section 9]{planarFlow} for more details.
        Here, we replace it by $\O(\tau K)$ using \cref{rmk:boundary-size}.}
		time, where where $K$ is the number of blocks changed in $\vw$.

		\item $\textsc{Move}(\alpha\in\mathbb{R},\vv\in\R^{m} $ given implicitly as a set of changed coordinates): 
		Implicitly update 
		\begin{align*} 
			\vs &\leftarrow \vs+\alpha\mw^{-1/2}\widetilde{\mproj}_{\vw} \vv, \\
			\vf &\leftarrow \vf+ \alpha \mw^{1/2}\vv - \alpha \mw^{1/2} \widetilde{\mproj}'_{\vw} \vv,
		\end{align*} 
		for some $\widetilde{\mproj}_{\vw}$ satisfying $\|(\widetilde{\mproj}_{\vw} -\mproj_{\vw}) \vv \|_2 \leq\eta\epssc \norm{\vv}_2$ and $\widetilde{\mproj}_{\vw} \vv \in \range{\mb}$,
		and some other $\widetilde{\mproj}'_{\vw}$ satisfying
		$\|\widetilde{\mproj}'_{\vw} \vv - \mproj_{\vw} \vv \|_2 \leq O(\eta \epssc) \norm{\vv}_2$ and
		$\mb^\top \mw^{1/2}\widetilde{\mproj}'_{\vw}\vv= \mb^\top \mw^{1/2} \vv$.
		
		The total runtime is $\widetilde{O}(\epsilon_{\mproj}^{-2}\tau K)$, where $K$ is the number of blocks changed in $\vv$.
		
		\item $\textsc{Approximate}()\rightarrow\R^{2m}$: Return the vector pair $\of,\os$ implicitly as a set of changed coordinates, 
		satisfying $\|\mw^{-1/2}(\of-\vf)\|_{\infty}\leq\overline{\epsilon}$ and $\|\mw^{1/2}(\os-\vs)\|_{\infty}\leq\overline{\epsilon}$, for the current weight $\vw$ and the current solutions $\vf, \vs$. 
		
		\item $\textsc{Exact}()\rightarrow\R^{2m}$: 
		Output the current vector $\vf,\vs$ in $\O(m \epssc^{-2})$ time.
	\end{itemize}
	Suppose $\alpha\|\vv\|_{2}\leq\beta$ for some $\beta$ for all calls to \textsc{Move}.
	Suppose in each step, \textsc{Reweight}, \textsc{Move} and \textsc{Approximate} are called in order. 
	Let $K$ denote the total number of blocks changed in $\vv$ and $\vw$ between the $(k-1)$-th and $k$-th \textsc{Reweight} and \textsc{Move} calls. Then at the $k$-th \textsc{Approximate} call,
	\begin{itemize}
		\item the data structure 
		sets $\of_{[i]} \leftarrow \vf^{(k)}_{[i]},\os_{[i]}\leftarrow \vs^{(k)}_{[i]}$
		for $O(N_k\defeq 2^{2\ell_{k}}(\frac{\beta}{\overline{\epsilon}})^{2}\log^{2}m)$ blocks $i$,
		where $\ell_{k}$ is the largest integer
		$\ell$ with $k \equiv 0\mod2^{\ell}$ when $k\neq 0$ and $\ell_0=0$, and
		\item the amortized time for the $k$-th \textsc{Approximate} call
		is $\widetilde{O}(\epsilon_{\mproj}^{-2}{\tau(K+N_{k-2^{\ell_k}})})$.
	\end{itemize}
\end{theorem}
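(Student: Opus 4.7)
The plan is to take the data structures \textsc{MaintainSlack} and \textsc{MaintainFlow} of \cite{planarFlow} essentially verbatim, and adapt only the accounting to our blocking scheme: each leaf node of $\ct$ corresponds to one block, and $\of_{[i]}, \os_{[i]}$ are updated as a unit rather than coordinate-by-coordinate. Since an update to any coordinate $e \in E(H_{\text{leaf}})$ in the original scheme triggers recomputation along the unique path from $H_{\text{leaf}}$ to the root, our block-wise updates leave the skeleton of the algorithm unchanged -- they just ensure that all $O(\tau)$ coordinates in a leaf block are refreshed together, which can only reduce the number of distinct leaf-to-root paths touched. The correctness of the invariants $\|\mw^{-1/2}(\of-\vf)\|_\infty \leq \overline{\eps}$, $\|\mw^{1/2}(\os-\vs)\|_\infty \leq \overline{\eps}$, and the projection approximation guarantees in \textsc{Move}, follow from the corresponding proofs in \cite{planarFlow} applied block-by-block, using the factorization of $\ml^{-1}$ from \cref{thm:L-inv-factorization} and the \textsc{DynamicSC} bounds from \cref{thm:apxsc}.

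The first genuinely new piece is the runtime bound $\widetilde{O}(\epssc^{-2} \tau K)$ for \textsc{Reweight} and \textsc{Move}. Both procedures reduce, via the analysis in \cite{planarFlow}, to calling the corresponding \textsc{DynamicSC} operations and then performing work proportional to $\sum_{H\in\mathcal{H}}|F_H \cup \partial H|$, where $\mathcal{H}$ is the set of tree nodes on the paths from the $K$ affected leaves to the root. I would bound this sum directly: by \cref{rmk:boundary-size}, each term is $O(\tau \polylog(m))$, and $|\mathcal{H}| \leq K \cdot \eta = O(K \log m)$ since $\ct$ has height $O(\log m)$ by \cref{thm:construct-separator-tree}. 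Thus the sum is $\widetilde{O}(\tau K)$, replacing the $\sqrt{mK}$ bound from the planar setting (which exploited $\sqrt{|V(H)|}$-sized boundaries and a more delicate path-length argument). This is the substitution flagged in the theorem's footnote.

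For \textsc{Approximate}, I would reuse the $\ell$-ary batching scheme of \cite{planarFlow}: at step $k$, for each $\ell$ with $k \equiv 0 \bmod 2^{\ell}$, the data structure compares $\vf^{(k)}$ and $\vs^{(k)}$ against the stored approximations from step $k-2^\ell$ in the local norms, and commits updates to any block whose deviation exceeds $\overline{\eps}/(2\lceil \log m\rceil)$. The potential argument of \cite{planarFlow} (or its block generalization, which is immediate since the norm on each block is $O(\tau)$-dimensional and the step bound $\|h\vv\|_2 \leq \beta$ is preserved) yields the $N_k = O(2^{2\ell_k}(\beta/\overline{\eps})^2 \log^2 m)$ bound on the number of blocks updated at step $k$. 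The amortized runtime then follows from the \textsc{Reweight}/\textsc{Move} bound above: recomputing $N_{k-2^{\ell_k}}$ blocks and $K$ directly-changed blocks costs $\widetilde{O}(\epssc^{-2}\tau(K + N_{k-2^{\ell_k}}))$ by the same $\sum |F_H\cup \partial H|$ accounting.

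The main obstacle is really just bookkeeping: verifying that every step of the \cite{planarFlow} proofs that references ``a coordinate $e$'' still goes through when $e$ is replaced by a block of size $O(\tau)$, in particular that the local norms $\|\cdot\|_{\of_{[i]}}$ used in the thresholding and the martingale-style potential arguments behave identically block-wise. Given that the block structure is aligned with the leaves of $\ct$ (so blocks never cross data-structure boundaries), this amounts to checking that sums $\sum_i \|\vv_{[i]}\|_2^2 = \|\vv\|_2^2$ partition cleanly, which they do by construction, and that $N_k$ still captures the correct count because the potential from \cite{planarFlow} is additive over coordinates and hence over blocks.
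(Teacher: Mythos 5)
Your proposal is correct and follows essentially the same route as the paper, which treats this theorem as a black-box import of \cite[Theorems 9, 10]{planarFlow} with two modifications: generalizing coordinate-wise approximation to block-wise (aligned with the leaves of $\ct$), and replacing the planar bound $\sum_{H\in\mathcal{H}}|F_H\cup\partial H| \le \sqrt{mK}$ with $\widetilde{O}(\tau K)$ via \cref{rmk:boundary-size} and the $O(\log m)$ height of $\ct$. Your write-up supplies more of the intermediate bookkeeping than the paper (which relegates the key substitution to a footnote and asserts the block generalization without elaboration), but the substance is the same.
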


We note that running the data structure above for each $2^{\ell}$ step, takes $\widetilde{O}(2^{2\ell}\tau)$ total time. For $2^{\ell}=\Omega(\sqrt{m})$  total steps, this is $O(m\tau)$. However, the initialization only takes $\widetilde{O}(m)$ time. Therefore, we restart the data structure every $2^{\ell}=\Omega(\sqrt{m/\tau})$ steps.

	\section{Proof of main theorems}

\begin{algorithm}

    \caption{Implementation of Robust Interior Point Method\label{algo:IPM_impl}}
    
    \begin{algorithmic}[1]
    
    \Procedure{$\textsc{CenteringImpl}$}{$\mb,\phi,\vf,\vs,t_{\mathrm{start}},t_{\mathrm{end}},\tau$}
    
    \State $G$: graph on $n$ vertices and $m$ edges with incidence matrix $\mb$
    \State $\mathsf{Solution}$: data structures for slack and flow maintenance 
        \Comment \cref{thm:sol-maintain}
    
    \State $\alpha \defeq \frac{1}{2^{20}\lambda}, \lambda \defeq 64\log(256m^{2})$
    
    \State $t\leftarrow t_{\mathrm{start}}$, $\of\leftarrow\vf,\os\leftarrow\vs,\ot\leftarrow t$, $\mw\leftarrow\nabla^{2}\phi(\of)^{-1},k\leftarrow 0$
    \Comment variable initialization
    \State $\vv_{i} \leftarrow \sinh(\lambda\gamma^{\ot}(\of,\os)_{i})\mu_i^{\ot}(\of_i,\os_i)$ for all $i \in [n]$ 
    \Comment data structure initialization
    \State $\mathsf{solution}.\textsc{Initalize}(G,\vf,\ot^{-1}\vs,\vv, \mw,\epssc=O(\alpha/\log m),\overline{\epsilon}=\alpha)$   

    \Comment choose $\epssc \leq \alpha<\eta$ in \cref{thm:sol-maintain} 
    
    \While{$t\geq t_{\mathrm{end}}$}
    
    \State $t\leftarrow\max\{(1-\frac{\alpha}{\sqrt{m}})t, t_{\mathrm{end}}\},k\leftarrow k+1$

    \State Update $h=-\alpha/\|\cosh(\lambda\gamma^{\ot}(\of,\os))\|_{2}$
    
    \State Update the diagonal weight matrix $\mw=\nabla^{2}\phi(\of)^{-1}$\label{line:step_given_3_impl}
    
    \State Update step direction $\vv_{i} \leftarrow \sinh(\lambda\gamma^{\ot}(\of,\os)_{i}) \mu_i^{\ot}(\of_i,\os_i)$ for all $i$ where $\of_{i}$ or $\os_i$ has changed \label{line:step_given_end_impl}
    
    \State $\mathsf{solution}.\textsc{Reweight}(\mw)$ \Comment update data structure with new weights
    % \State \Comment $\mproj_{\vw}\defeq\mw^{1/2}\mb(\mb^{\top}\mw\mb)^{-1}\mb^{\top}\mw^{1/2}$
    \State $\mathsf{solution}.\textsc{Move}(h,\vv)$
    \Comment  update $\vf$ and $\vs$ 
    
    \State $\of,\ot\os\leftarrow\mathsf{solution}.\textsc{Approximate}()$ 
    \Comment maintain $\of,\os$  %$\|\mw^{-1/2}(\of-\vf)\|_{\infty}\leq\alpha,\|\mw^{1/2}(\os-\vs)\|_{\infty}\leq\ot\alpha$
    
    \label{line:step_user_end_impl}
    
    \If {$|\ot-t|\geq\alpha\ot$ \textbf{or} $k>\sqrt{m/\tau}$} \Comment{restart data structure}
        \State $\vf,\vs\leftarrow \mathsf{solution}.\textsc{Exact}()$
        \State $\ot\leftarrow t,k\leftarrow 0$ 
        \State $\mathsf{solution}.\textsc{Initalize}(G,\vf,\ot^{-1}\vs,\vv, \mw,\epssc=O(\alpha/\log m),\overline{\epsilon}=\alpha)$   
    \EndIf 
    
    \EndWhile
    
    \State \Return $\mathsf{solution}.\textsc{Exact}()$
    
    \EndProcedure
    
    \end{algorithmic}
    
    \end{algorithm}

\MainThm*

\begin{proof}
Since \cref{thm:IPM} requires an interior point in the polytope, we note one can find an interior point with $r\geq \frac{1}{4m}$ in $O(m)$ time, see \cite{planarFlow}.

Now, we bound the parameters $L,R,r$ in \cref{thm:IPM}. Clearly,
$L=\|\vc^{\mathrm{new}}\|_{2}=O(Mm)$ and $R=\|\vu^{\mathrm{new}}-\vl^{\mathrm{new}}\|_{2}=O(Mm)$.

The RIPM in \cref{thm:IPM} runs the subroutine \textsc{Centering} twice.
In the first run, the constraint matrix is the incidence matrix of a new underlying graph,
constructed by making three copies of each edge in the original graph $G$. 
Since copying edges does not affect treewidth, and our data structures allow for duplicate edges,
we use the implementation given in \textsc{CenteringImpl} (\cref{algo:IPM_impl}) for both runs.

By the guarantees of \cref{thm:sol-maintain},
we correctly maintain $\vf$ and $\vs$ at every step in \textsc{CenteringImpl}, 
and the requirements on $\of$ and $\os$ for the RIPM are satisfied.
Hence, \cref{thm:IPM} shows that we can find a circulation $\vf$
such that $(\vc^{\mathrm{new}})^{\top}\vf\leq\mathrm{OPT}-\frac{1}{2}$
by setting $\epsilon=\frac{1}{CM^{2}m^{2}}$ for some large constant
$C$ in \cref{alg:IPM_centering}. 
Note that $\vf$, when restricted to the original graph, is almost
a flow routing the required demand with flow value off by at most $\frac{1}{2nM}$. This
is because sending extra $k$ units of fractional flow from $s$ to
$t$ gives extra negative cost $\leq-knM$. Now we can round $\vf$
to an integral flow $\vf^{\mathrm{int}}$ with same or better flow
value using no more than $\widetilde{O}(m)$ time \cite{kang2015flow}.
Since $\vf^{\mathrm{int}}$ is integral with flow value at least the total demand
minus $\frac{1}{2}$, $\vf^{\mathrm{int}}$ routes the demand completely.
Again, since $\vf^{\mathrm{int}}$ is integral with cost at most $\mathrm{OPT}-\frac{1}{2}$,
$\vf^{\mathrm{int}}$ must have the minimum cost.

Finally, we bound the runtime of \textsc{CenteringImpl}. Before, we use the data structure for flow and slack maintenance, we need to construct the separator tree, this can be done in $\O(n\tau)$ time using~\cref{thm:construct-separator-tree}.
We initialize the data structures for flow and slack by \textsc{Initialize}.
Here, the data structures are given the first IPM step direction $\vv$ for preprocessing; the actual step is taken in the first iteration of the main while-loop.
At each step of \textsc{CenteringImpl}, we perform the implicit update of $\vf$ and $\vs$ using $\textsc{Move}$;  we update $\mw$ in the data structures using $\textsc{Reweight}$; and we construct the explicit approximations 
$\of$ and $\os$ using $\textsc{Approximate}$; 
each in the respective flow and slack data structures. 
We return the true $(\vf,\vs)$ by $\textsc{Exact}$. 
The total cost of \textsc{CenteringImpl} is dominated by $\textsc{Move}$, $\textsc{Reweight}$, and $\textsc{Approximate}$. 

Since we call \textsc{Move}, \textsc{Reweight} and \textsc{Approximate} in order in each step and the runtime for \textsc{Move}, \textsc{Reweight} are both dominated by the runtime for \textsc{Approximate}, it suffices to bound the runtime for \textsc{Approximate} only.  
\cref{thm:IPM} guarantees that there are $T=O(\sqrt{m}\log n\log(nM))$ total \textsc{Approximate} calls. We implement this by restarting the data structure for every $\sqrt{m/\tau}$ steps.

We note that that at the $k$-th step, the number of blocks changed in $\vw$ and $\vv$ is bounded by $K \defeq O(2^{2\ell_{k-1}} \log^2 m)$, where $\ell_k$ is the largest integer $\ell$ with $k \equiv 0 \mod 2^{\ell}$, or equivalently, the number of trailing zeros in the binary representation of $k$.
\cref{thm:IPM} further guarantees we can apply \cref{thm:sol-maintain} with parameter $\beta = O(1/\log m)$,
which in turn shows the amortized time for the $k$-th call is
\[
	\O(\epssc^{-2} {\tau(K + N_{k-2^{\ell_k}})}).
\]

where $N_{k} \defeq 2^{2\ell_k} (\beta/\alpha)^2 \log^2 m = O(2^{2\ell_k} \log^2 m)$,
where $\alpha = O(1/\log m)$ and $\epsilon_{\mproj}= O(1/\log m)$ are defined in \textsc{CenteringImpl}.
Observe that $K + N_{k-2^{\ell_k}} = O(N_{k-2^{\ell_k}})$. Now, summing over $T=\sqrt{m/\tau}$ steps, the time is
\begin{align*}
	O(\sqrt{m/\tau})  \sum_{k=1}^T \O(\tau{N_{k-2^{\ell_k}}}) &= 
	O(\sqrt{m/\tau})  \sum_{\ell=0}^{\log T} \frac{T}{2^\ell}\cdot \widetilde{O}(2^{2\ell}\tau) =\widetilde{O}(m).
\end{align*}

We note the initialization time is also $\O(m)$. Recall that \cref{thm:IPM}'s guarantee of total number of iterations is $O(\sqrt{m}\log n\log(nM))$, the data structure restarts for $\O(\tau)$-many times in total. The total runtime for the RIPM data structure is $\O (m\sqrt{\tau} \log M)$.

Hence, we conclude the overall running time is $\O(m\sqrt{\tau}\log M + n\tau)$.
\end{proof}

\ApproxTW*

Our algorithm requires some tree decomposition of the graph as input. We use the following lemma to construct the initial tree decomposition.
\begin{lemma}[\cite{brandt2019approximating}]\label{lem:bw19}
    For any $\tfrac23<\alpha<1$ and $0<\epsilon<1-\alpha$, given a graph $G$ with $n$ vertices and $m$ edges, if the graph $G$ contains an $\alpha$-balanced vertex separator of size $K$, then there is a randomized algorithm that finds a balanced vertex separator of size $\O(K^2/\epsilon)$ in $\O(mK^3/\epsilon)$ expected time. The algorithm does not require knowledge of $K$.
\end{lemma}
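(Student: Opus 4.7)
The plan is to reduce the problem to a sequence of single-source min-vertex-cut computations (via augmenting-path max-flow) and to grow the desired separator iteratively by peeling small chunks. Initialize $S \leftarrow \emptyset$. In each round, compute the largest connected component $C$ of $G \setminus S$; if $|C| \leq \alpha n$, output $S$. Otherwise, pick a uniformly random source $v \in C$ and a uniformly random sink set $T \subseteq C \setminus \{v\}$ of size $\Theta(1/\epsilon)$, and compute the minimum $v$-to-$T$ vertex cut by running augmenting-path max-flow on the usual vertex-split construction. If the cut has value at most a guessed $K$, absorb both the cut vertices and the $v$-side piece into $S$; otherwise, resample $v$ and $T$.

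The correctness uses only the promise that $G$ has some $\alpha$-balanced vertex separator $S^\star$ of size $\leq K$. A monotonicity observation says that the restriction of $S^\star$ to any vertex subset $C$ remains an $\alpha$-balanced separator of the induced subgraph $G[C]$ of size $\leq K$, so $C \setminus S^\star$ breaks into pieces each of size at most $\alpha|C|$. With constant probability the ``good event'' happens: a random $v$ lands in the smallest such piece $P$, and a random $T$ of size $\Theta(1/\epsilon)$ has at least one element outside $P$. In the good event, $S^\star$ itself certifies that the min $v$-to-$T$ vertex cut has value at most $K$, and the $v$-side piece returned by the max-flow has size between $\Omega(\epsilon n/K)$ and $\alpha n$. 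Absorbing this piece into $S$ therefore both (i) adds $\leq K$ vertices to $S$ and (ii) strictly decreases the largest component of $G \setminus S$ by a charging potential.

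Summing over rounds gives the quantitative bounds. The peel-size lower bound implies at most $O(K/\epsilon)$ successful rounds, for a final $|S| = O(K^2/\epsilon)$. For running time, each max-flow instance has value bounded by $K$, so $K$ rounds of augmenting BFS cost $O(mK)$; the per-trial good-event probability is $\Omega(1/K)$ (conditional probability of landing in the smallest of up to $K$ pieces, times the constant chance $T$ misses $P$), so we expect $O(K)$ resamples per successful round, giving $O(mK^2)$ per round and $O(mK^3/\epsilon)$ total. Since the algorithm is not given $K$, we wrap it in an outer doubling search over guessed values $K' = 1, 2, 4, \ldots$, accepting the first $K'$ whose output satisfies the balance and size constraints; the overhead is only a $\log K$ factor absorbed into $\widetilde O(\cdot)$.

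The main obstacle will be pinning down the probabilistic analysis of the good event so the $\Omega(1/K)$ per-trial success bound is tight (and hence the expected number of max-flow invocations per round is $O(K)$ rather than larger), together with the delicate choice of $|T|$ that keeps peeled chunks both large enough to cap the round count at $O(K/\epsilon)$ and small enough not to overshoot the $\alpha n$ balance threshold; a secondary subtlety is guaranteeing that after each absorption step the remaining graph still inherits the separator promise, which follows from the monotonicity observation above.
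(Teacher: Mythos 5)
This lemma is imported from \cite{brandt2019approximating}; the paper gives no proof of it, so there is nothing to compare against and your attempt has to stand on its own. Your architecture (random source $v$ plus a random sink set $T$ of size $\Theta(1/\epsilon)$, min vertex cut via $K$ rounds of augmenting paths, peel-and-recurse, outer doubling search over $K$) is indeed the right family of ideas, but the probabilistic core, which is exactly what produces the exponents in $\O(K^2/\epsilon)$ and $\O(mK^3/\epsilon)$, does not hold up. First, the ``good event'' is both too weak and too rare. Deleting $K$ vertices can create far more than $K$ components (a star is the extreme case), so ``landing in the smallest of up to $K$ pieces'' is not a bound of $\Omega(1/K)$: the smallest piece can be a single vertex and the landing probability can be $1/|C|$. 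Moreover, for $S^\star$ to certify that the min $v$-to-$T$ cut has value at most $K$ you need \emph{all} of $T$ to avoid the piece $P$ containing $v$, not ``at least one element outside $P$''; if some $t\in T$ lies in $P$ the cut must also sever $v$ from $t$ inside $P$, which $S^\star$ does not witness. When $|C|$ is only slightly above $\alpha n$ a piece $P$ can occupy a $1-O(\epsilon)$ fraction of $C$, and then forcing all $\Theta(1/\epsilon)$ samples of $T$ outside $P$ happens with probability $\epsilon^{\Theta(1/\epsilon)}$ --- this is precisely the tension between wanting $|T|$ large (to cap the peeled side) and wanting $T$ to miss $P$ (to certify the cut), and it is why the target balance must be relaxed to $(\alpha+\epsilon)n$ rather than your stopping threshold of $\alpha n$ (this is the role of the hypothesis $\epsilon<1-\alpha$).

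Second, nothing in your algorithm forces the peeled chunk to be large. The minimum $v$-to-$T$ cut is free to have $v$-side equal to $\{v\}$ (take $v$ of degree at most $K$), so the asserted lower bound $\Omega(\epsilon n/K)$ on the peeled piece is unjustified; without it the round count is not $O(K/\epsilon)$ but potentially $\Omega(n)$, and both the size bound $O(K^2/\epsilon)$ and the time bound collapse. Some additional mechanism is needed here (e.g.\ a side-maximal min cut together with an argument about what it must contain, or a different progress measure), and supplying it is the actual content of the lemma. Smaller issues: ``absorb the $v$-side piece into $S$'' contradicts your own accounting that each round adds at most $K$ vertices to $S$ (you should add only the cut and set the peeled piece aside as a finished component); the monotonicity observation gives pieces of size at most $\alpha n$, not $\alpha|C|$; the upper bound $|A|\le\alpha n$ on the $v$-side needs care because $A$ is chosen after seeing $T$, so you cannot apply the ``a fixed large set meets a random $T$'' argument directly; and the doubling search needs an explicit cap on the number of trials per guess $K'$, since for $K'<K$ the inner loop as written may never terminate.
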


The following lemma establishes the relationship between max flow and balanced edge separators. 
We first give the relevant definitions.
For a given constant $c \leq 1/2$, a directed edge-cut $(S, \overline{S})$ is called a \emph{$c$-balanced edge separator} if both $|S| \geq cn$ and $|\overline{S}| \geq cn$. 
The capacity of the cut $(S, \overline{S})$ is the total capacity of all edges crossing the cut. The minimum $c$-balanced edge separator problem is the $c$-balanced edge separator with minimum capacity. A $\lambda$ pseudo-approximation to the minimum $c$-balanced edge separator is a $c'$-balanced cut $(S, \overline{S})$ for some other constant $c'$, whose capacities is within a factor of $\lambda$ of that of the minimum $c$-balanced edge separator.

\begin{lemma}[\cite{arora2016combinatorial}]
    An $O(\log n)$ pseudo-approximation to the minimum c-balanced edge separator in directed graphs can be computed using $\polylog n$ single-commodity flow computations on the same graph.
\end{lemma}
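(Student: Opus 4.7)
The plan is to follow the combinatorial primal-dual framework of Arora--Kale, which reduces balanced edge separator to a sequence of single-commodity maximum flow computations via multiplicative weights updates (MWU). First, I would binary-search over the conjectured optimal cut capacity $\lambda^\star$; since capacities are polynomially bounded, this costs only $O(\log n)$ outer iterations. For each guess $\lambda$, the inner loop maintains a distribution $p^{(t)}$ over demand pairs (or equivalently, a scaled matching $M^{(t)}$ supported on $\Theta(n)$ vertex pairs that encodes how ``hard'' it currently is to separate different portions of the graph).

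In each inner iteration I would construct a single-commodity max-flow instance on the input graph, with capacities derived from the original capacities and a super-source / super-sink attached to the endpoints of the current matching with appropriately scaled capacities that encode $p^{(t)}$. Solving this max-flow, exactly one of two things happens: either the flow routes essentially all of the demand, in which case the flow embedding certifies (via LP duality, i.e.\ the multicommodity cut-flow gap for the relaxation) that no cut of capacity $\lambda$ can be too balanced and we update $p^{(t+1)}$ by multiplicatively increasing weight on the unsatisfied demand pairs; or the max-flow saturates a cut of capacity $O(\lambda)$, which I feed into a region-growing / ball-growing argument on the flow decomposition to extract a cut of capacity $O(\lambda \log n)$ that is at least $c'$-balanced for some constant $c' < c$. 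Standard MWU regret analysis bounds the number of inner iterations by $O(\log^2 n)$, so the total number of single-commodity flow calls is $\polylog n$.

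The main obstacle is the extraction step: producing a \emph{balanced} cut from a single-commodity max-flow, whose native min-cut can be arbitrarily lopsided. I would handle this via the standard ball-growing trick on the residual graph around the saturated edges, adapted to the directed setting by symmetrizing (running both $s$-$t$ and $t$-$s$ flow computations, or adding reverse edges with appropriate capacities) so that the cut-flow duality used by the MWU argument still holds on the directed instance; this symmetrization increases the number of flow calls only by a constant factor. The $O(\log n)$ factor in the approximation comes precisely from this region-growing step, and the loss from $c$-balance to $c'$-balance (making it a ``pseudo'' rather than true approximation) is what lets region-growing succeed while still cutting few edges. Putting the $O(\log n)$ outer binary search calls together with the $O(\log^2 n)$ MWU rounds and the constant-factor overhead for directedness yields the claimed $\polylog n$ single-commodity flow computations.
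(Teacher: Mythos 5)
This lemma is imported verbatim from \cite{arora2016combinatorial}; the paper gives no proof of its own, so the only meaningful comparison is between your sketch and the actual Arora--Kale argument. Your high-level skeleton --- binary search over the cut value, a multiplicative-weights outer loop, and in each round a single $s$--$t$ max-flow built by attaching a super-source and super-sink to the endpoints of a current demand/matching, with the dichotomy ``either the flow routes and we reweight, or we extract a cut'' --- is indeed the right framework (it is the cut-matching/expander-flow paradigm that Arora--Kale refine to get the $O(\log n)$ factor).

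There are, however, two substantive problems with how you close the argument. First, the extraction step: region growing (ball growing) is a technique for rounding a \emph{multicommodity} flow LP or a fractional semimetric; it has no natural analogue on a single $s$--$t$ max-flow, and it is not where the $O(\log n)$ comes from here. In the single-commodity framework the cut is read off directly from max-flow/min-cut duality on the super-source/super-sink instance: if the flow fails to route the (balanced) demand, the min cut separates a constant fraction of the matched pairs and hence is (after accumulating a few such cuts) $c'$-balanced by construction; the $O(\log n)$ approximation loss instead comes from the MWU analysis --- the width of the flow oracle and the congestion of the expander embedding certified by the union of the routed matchings. Attributing the logarithmic loss to region growing means your proof of the approximation ratio does not actually go through as written. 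Second, directedness: symmetrizing the graph (adding reverse edges or running flows in both directions and combining) does not preserve directed balanced cuts --- and in the application in this paper the directedness is essential, since the instance $G^*$ encodes vertex separators precisely through its edge orientations. Arora--Kale handle the directed case natively, using directed flows and the directed cut--flow duality for the corresponding LP relaxation; your reduction to the undirected case would solve a genuinely different problem. The iteration count bookkeeping ($O(\log n)$ outer, $\polylog n$ inner) is fine.
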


\begin{proof}[Proof of \cref{cor:tw}]
    It's well known that given a $O(\log n)$ approximation algorithm for finding a balanced vertex separator, one can construct a tree decomposition of width $O(\tw(G)\log n)$.  Specifically, the algorithm of~\cite{bodlaender1995approximating} find a tree decomposition by recursively using a balanced vertex separator algorithm.  

    Now, it suffices to show we can find a $\log(n)$ pseudo-approximation balanced vertex separator in $\O(m\cdot \tw(G)^3)$ expected time. Using the reduction from~\cite{leighton1999multicommodity}, we reduce the balanced vertex separator to directed edge separator on graph $G^*=(V^*,E^*)$, where
    \[
		V^*=\big\{v \mid v \in V\big\} \cup \big\{v^{\prime} \mid v \in V\big\},
	\]
and
	\[
	E^*= \big\{ \left(v, v^{\prime}\right) \mid v \in V\big\} \cup 
	\big\{\left(u^{\prime}, v\right) \mid(u, v) \in E \big\} \cup 
	\big\{\left(v^{\prime}, u\right) \mid(u, v) \in E\big\}.
\]

We note that $\tw(G^*) = O(\tw(G))$. 
This shows $G^*$ has a $2/3$-balanced vertex separator of size $O(\tw(G))$. We first use~\cref{lem:bw19} to construct a $\O(\tw(G)^2)$-separator tree for $G^*$. Then, we use the algorithm in~\cite{arora2016combinatorial} combined with our flow algorithm to find a balanced edge separator in $\O(m\cdot \tw(G))$ expected time. 
Hence, we can find a balanced vertex separator in $\O(m\cdot \tw(G)^3)$ expected time.
\end{proof}
	\bibliographystyle{alpha}
	\bibliography{ref-new}

\end{document}